\documentclass[12pt, draftclsnofoot, onecolumn]{IEEEtran}

\usepackage{cite}
\usepackage{graphicx,multirow}
\usepackage{amssymb,hhline,enumerate,dsfont}
\usepackage{amsmath}
\usepackage{array}
\usepackage{amsmath,url}
\usepackage{psfrag}
\usepackage[usenames,dvipsnames]{pstricks}
\usepackage{epsfig}
\usepackage{pst-grad} 
\usepackage{pst-plot} 


\newtheorem{thm}{Theorem}
\newtheorem{lem}{Lemma}

\newtheorem{cor}{Corollary}
\newtheorem{proof}{Proof}


\IEEEoverridecommandlockouts

\hyphenation{op-tical net-works semi-conduc-tor}

\begin{document}
%
\title{Linear-Complexity Overhead-Optimized Random Linear Network Codes}
\author{Kaveh Mahdaviani,~\IEEEmembership{Student Member, IEEE}, Raman Yazdani,~\IEEEmembership{Member, IEEE}, Masoud Ardakani,~\IEEEmembership{Senior Member, IEEE}
\thanks{This paper was presented in part at
the 2012 International Symposium on Network Coding (NetCod'12), Cambridge, MA, USA \cite{Mahdaviani12}, and 2013 International Symposium on Network Coding (NetCod'13), Calgary, Canada \cite{Mahdaviani13}.}
}

%

\maketitle

\begin{abstract}
Sparse random linear network coding (SRLNC) is an attractive technique proposed in the literature to reduce the decoding complexity of random linear network coding. Recognizing the fact that the existing SRLNC schemes are not efficient in terms of the required reception overhead, we consider the problem of designing overhead-optimized SRLNC schemes. To this end, we introduce a new design of SRLNC scheme that enjoys very small reception overhead while maintaining the main benefit of SRLNC, i.e., its linear encoding/decoding complexity. We also provide a mathematical framework for the asymptotic analysis and design of this class of codes based on density evolution (DE) equations. To the best of our knowledge, this work introduces the first DE analysis in the context of network coding. Our analysis method then enables us to design network codes with reception overheads in the order of a few percent. We also investigate the finite-length performance of the proposed codes and through numerical examples we show that our proposed codes have significantly lower reception overheads compared to all existing linear-complexity random linear network coding schemes.
\end{abstract}


%
\IEEEpeerreviewmaketitle

\section{Introduction} \label{sec:introduction}

Soon after the introduction of its basic concept in \cite{NetCoding}, network coding was accepted as a promising technique for multicast and attracted a lot of attention in the research community. As opposed to conventional packet networks where intermediate nodes can only store and forward the incoming packets, in network coding the intermediate nodes can also combine the incoming packets to form (encode) an outgoing packet. Later, the idea of linearly combining the incoming packets was introduced in \cite{LinNetCoding} and extended in \cite{AlgNetCoding} by using an algebraic approach. Also by applying random coefficients, random linear network coding (RLNC) \cite{Benefits, RandNetCoding} was shown to be sufficient for achieving zero reception overhead with failure probability arbitrarily close to zero. As a result, network coding became an attractive technique for multicast over networks with random topology.

In RLNC, the source node and all the intermediate nodes of the network encode the data packets by forming random linear combinations of them. The receivers then wait to receive enough encoded packets, in other words enough linear combinations of the information packets, such that they can form full rank systems of linear equations. Each receiver can now decode the information packets by solving the system of linear equations corresponding to the set of received packets. It is shown in \cite{Benefits, RandNetCoding} that by using RLNC with a sufficiently large code alphabet $q$, it is possible to achieve zero reception overhead\footnote{In this paper, the reception overhead is defined as the difference between the number of received packets required for successful decoding and the number of information packets divided by the number of information packets.} with failure probability arbitrary close to zero. The encoding complexity of RLNC for a block of $K$ information packets each with $s$ symbols is $O(Ks)$ operations per coded packet where the operations are done in $\mathrm{GF}(q)$. The complexity of decoding then scales as $O(K^2+Ks)$ per information packet which becomes impractical when the block size $K$ is moderate to large.


To reduce the decoding complexity of network coding, the idea of fragmenting the information packet blocks into distinct \emph{generations} is proposed in \cite{Practical}. This way, random linear combinations are formed only within each generation. This makes the final linear equation system solvable locally within each generation and thus sparse. This technique, however, requires a large number of control messages to be exchanged between the nodes to combat the problem of \emph{rare blocks} and \emph{block reconciliation} \cite{bharambe06}. To avoid this, a method called \emph{sparse} RLNC (SRLNC) is proposed in \cite{Efficient_Methods} which uses a simple random schedule for selecting which generation to transmit at any time. This method reduces the encoding complexity to $O(gs)$ per coded packet and the decoding complexity to $O(g^2+gs)$ per information packet, where $g$ denotes the number of information packets in each generation. It has been shown that to keep the SRLNC tractable in most of the practical settings the generation size should be upper bounded by a small constant (e.g. 512 for typical notebook computers \cite{Uusee}). Throughout this paper we limit our discussion to the case where all generations are of a small constant size $g$, which does not scale with the information block length $K$. This complexity is practically feasible if $g$ is not very large, making SRLNC an attractive solution for multicast. Unfortunately, the reception overhead under this scheme is affected by the \emph{curse of coupon collector} phenomenon \cite{Newman_Shepp , Erdos_Renyi_Coupon}, and thus even for very large alphabet size or number of information packets, the reception overhead does not vanish. In fact, the reception overhead grows with $K$ as $O(\log K)$ \cite{RandomAnnex}. Consequently, a trade-off is raised between reception overhead and complexity in SRLNC.

In general, the large reception overhead in SRLNC comes from two sources. The first and major source is random scheduling. More specifically, for all generations to become full rank, due to random scheduling, some generations will receive significantly more than $g$ packets resulting in a large reception overhead. The second source of reception overhead is the possibility of receiving linearly dependent combinations of the packets. The probability of receiving linearly dependant equations can be arbitrarily reduced by increasing the field size $q$ of the code alphabet \cite{RandNetCoding,Silva_Overlapping}. Another solution recently proposed in \cite{silva12} is to do pre-coding using maximum rank distance codes which is quite effective even for very small field sizes.


Knowing that SRLNC is complexity-efficient, there have been several attempts to decrease its reception overhead \cite{Efficient_Methods,Silva_Overlapping,Heidarzadeh,silva12,RandomAnnex,tang12,tang13}. For this purpose, the idea of using an outer code is introduced in \cite{Efficient_Methods}. In this method, an outer code which is considered as a separate block is applied to SRLNC. At the receiver, the outer decoder waits for the recovery of $1-\delta$ fraction of the generations for some small predefined $\delta$ and then participates in the decoding to recover the remaining $\delta$ fraction of the generations. This method is capable of reducing the reception overhead to a constant, independent of $K$. However, this scheme is still wasteful in terms of the reception overhead since it ignores the received packets pertaining to the $\delta$ fraction of the generations. Furthermore, waiting to receive enough packets to recover $1-\delta$ fraction of the generations when $\delta$ is small leads to a high probability of receiving more than $g$ packets in many generations. As a result, the reception overhead is considerably large even for infinite block lengths.

In \cite{Silva_Overlapping, Heidarzadeh} the idea of overlapping generations, where some packets are shared among different generations, is proposed. This overlap reduces the reception overhead of SRLNC since generations can help each other in the decoding process. Another \emph{overlapped} SRLNC scheme called \emph{Random Annex} codes \cite{RandomAnnex} proposes \emph{random} sharing of the packets between any two generations. Furthermore, combination of overlapping and outer coding (called expander chunked (EC) codes) is proposed in \cite{tang12,tang13}. In this scheme, expander graphs are used to form overlapped generations. By establishing an upper bound on the reception overhead and careful choice of parameters, it has been shown that the proposed scheme of \cite{tang12,tang13} outperforms all other previously existing overlapping schemes. However, as it will be shown later, the SRLNC designed based on EC codes analysis is equivalent to a special (but not the optimal) case of our proposed Gamma codes.

Another recently proposed low-complexity RLNC technique is called batched sparse (BATS) coding \cite{yang11,yang12}. BATS codes generalize the idea of fountain codes \cite{LT,Raptor} to a network with packet loss. Using BATS codes, the buffer requirement at intermediate nodes become independent of the information packets block size in tree networks \cite{yang12}. Furthermore, it is shown that these codes perform near the capacity of the underlying network in certain cases \cite{yang12}. Nevertheless, design of BATS codes requires the knowledge of the end-to-end transformation of the packets in the network which is not always available.

One important difference between BATS codes and SRLNC is that BATS codes perform RLNC inside variable-size subsets of information packets called \emph{batches} whereas SRLNC schemes perform RLNC inside fixed-size subsets of information packets, i.e., generations. For each batch in BATS codes, first a degree $d_i$ is determined by sampling from a degree distribution. Then, $d_i$ information packet is chosen uniformly at random from all information packets. Next, a fixed number $M$ of output packets (forming a batch) are encoded by performing RLNC inside the chosen subset. Then, the same process is repeated to form the next batch. The receiver needs to receive enough number of batches to decode all information packets. In BATS codes, each information packet can participate in multiple batches which can be seen as overlaps between batches. Thus, BATS codes and overlapping SRLNC are similar in imposing dependence between batches/generations.

A key observation which will lead to our proposed network coding scheme is that the overlap between different generations in overlapped SRLNC can be seen as having a repetition outer code acting on the common packets from overlapping generations. Thus, overlapped SRLNC can be seen as a special case of SRLNC with outer code. In overlapped SRLNC, on the contrary to the separate outer coding of \cite{Efficient_Methods}, there is no need to wait for the recovery of a large fraction of the generations before the repetition outer code can participate in the decoding. This can potentially reduce the reception overhead compared to the scheme of \cite{Efficient_Methods}. This point of view then leads to the idea of allowing the outer code to participate in the decoding, but not limiting the outer code to a repetition code. This in turn generates a host of new questions, some of which are answered in this work. For example, a major question is how one can design an outer code which provides minimum reception overhead. To the best of our knowledge, no general analysis and design technique for SRLNC with an outer code exists in the literature. The analysis methods presented in \cite{Heidarzadeh_Analysis,RandomAnnex,tang13} either assume specific network structures or specific coding schemes such as overlapping schemes and thus cannot be used to design outer coded SRLNC in a general way\footnote{In this work, the only constraint on the outer code is that we consider the class of linear outer codes that choose their variable nodes uniformly at random, which we refer to them as \emph{random linear outer codes}. This constraint simplifies the analysis and design of optimal codes. As will be revealed in the results section, despite the mentioned constraint, the optimal design achieves asymptotic overheads as small as $2\%$.}.

\subsection{Main idea and summary of contributions}

In this work, we propose a solution to the problem of designing low-overhead linear-complexity SRLNC with a random linear outer code. For this purpose, we introduce a new family of low-overhead linear-complexity network codes, called \emph{Gamma network codes}. In Gamma network codes, SRLNC with outer code is considered in a more general way, i.e., the outer code is not limited to a simple repetition outer code. Also, Gamma network codes do not rely on a large portion of generations being recovered before getting the outer code involved in the decoding. We then develop an analytical framework based on density evolution equations \cite{DensityEvolution} to investigate the impact of the outer code parameters on the average reception overhead. The importance of such framework is that it can be used both for (i) finding the limits on the performance measures of SRLNC with random linear outer code such as the minimum achievable reception overhead, (ii) track the decoding process, and (iii) to analytically design optimal codes.

While similar to \cite{Efficient_Methods} an outer code is suggested here, the design of Gamma network codes has major differences with that of \cite{Efficient_Methods}: (i) Unlike \cite{Efficient_Methods}, where the outer code has to wait for a large fraction of the generations to be recovered, here the outer code can participate in the decoding as soon as a single generation is recovered. In other words, outer decoding is done jointly with solving the linear equation systems instead of separate decoding used in \cite{Efficient_Methods}. (ii) In contrast to \cite{Efficient_Methods}, the received packets belonging to non-full rank generations are not ignored. (iii) Our outer codes are designed to have the ability of actively participating in the decoding when the fraction of known packets is much smaller than the code rate. As we will show later, the reception overhead of Gamma network codes is significantly smaller than that of \cite{Efficient_Methods}.

Gamma network codes are built based on the following facts/results: (1) Every received packet whose corresponding linear combination is linearly independent with those of all other received packets is innovative and must be used in the decoding process. (2) Assuming the field size of the code alphabet is large enough, before receiving enough packets to form a small number of full-rank generations, all received packets are linearly independent with high probability. (3) It is possible to design an outer code capable of successful decoding, based on receiving enough packets to have only a small fraction of full rank generations. Details of this code design is provided in Section \ref{sec:optimization}.

In summary, our solution works in the following way. Accepting an optimally small reception overhead, we continue receiving packets until a small fraction of the generations is full rank. Next, the carefully designed outer code comes to help to decode all other generations through providing enough information about the packets in the remaining generations to remove the rank deficiency in their corresponding linear equation systems. This will be done in an iterative decoding process alternating between the outer code and the SRLNC. Since nearly all received packets are used in the decoding process, the outer code does not introduce an excess overhead. The key to our finding is an intermediate performance analysis (i.e., density evolution equations) of SRLNC with outer code.

Our contributions are summarized as follows: (i) We introduce a new class of linear-complexity random linear network codes called Gamma network codes. This design is based on integrating a carefully designed outer code into SRLNC. Our design enables joint decoding of the outer and the SRLNC at the receivers and is shown to outperform all other existing linear-complexity random linear network codes. (ii) We derive density evolution equations for the asymptotic performance analysis of Gamma network codes. (iii) Using the asymptotic analysis, we propose an optimization technique to design optimized Gamma network codes with very small reception overheads. (iv) Finite-length performance of these codes are also evaluated and some methods to improve their performance are presented. We also compare our results with those of overlapping SRLNC schemes \cite{Silva_Overlapping, RandomAnnex,Efficient_Methods,tang13}. We will show that Gamma network codes are capable of reducing the reception overhead compared to all the existing linear-complexity random linear network coding schemes.

In our analysis, we assume that as long as less than g packets are received in a generation, these packets are linearly independent. This can be due to using a sufficiently large q or using methods of \cite{silva12}. The assumption is primarily made to prevent unnecessary complications and to be consistent with the convention in the literature \cite{RandomAnnex,Silva_Overlapping}. We study the effect of $q$ on the performance of Gamma network codes numerically in Section \ref{subsec:results_B}.

The rest of this paper is organized as follows. In the next section, we describe the encoding and decoding structure of the proposed Gamma network codes. Section~\ref{sec:analysis} Describes the asymptotic analysis of the performance of the proposed codes through introducing the decoding evolution chart. Next, we will propose an optimization technique to design Gamma network codes with minimum reception overhead in Section~\ref{sec:optimization}. Section~\ref{sec:results} contains numerical results and discussions. Moreover as an example of the applications of decoding evolution chart we will also propose techniques to design Gamma network codes which are suitable for achieving very small decoding failure probability in finite block lengths. Section~\ref{sec:improved_design} reviews the encoding procedure and suggests some improved designs. Finally, Section~\ref{sec:conclusion} concludes the paper.

\section{The Proposed Coding Scheme} \label{Sec:model}

\subsection{Network model} \label{sec:network_model}
Similar to \cite{RandomAnnex, Silva_Overlapping,tang13}, in this paper we consider the transmission of a file consisting of information packets from a source to a destination over a unicast link. The network structure is assumed to be dynamic with diverse routing, unknown and variable packet loss, and with random processing times at the intermediate nodes. It is further assumed that random linear combining is performed at the intermediate nodes on the available packets within each generation. As a result, the destination receives a random subset of the random linear combinations of the transmitted packets and is supposed to recover the information packets.

\subsection{Encoding}\label{sec:encoding}
The encoding process of Gamma network codes is done in two steps. In the first step, a file consisting of $K$ information packets, each having $d$ symbols in $\mathrm{GF}(q)$ is encoded via a linear outer code\footnote{As we will show in Section~\ref{sec:convergence}, a pre-code can also be helpful. To avoid complications, we do not discuss this here and leave it to Section~\ref{sec:convergence}.} $\mathcal{C}$ of rate $R$ giving rise to a block of $N$ \emph{outer coded} packets where $R=K/N$. These $N$ outer coded packets are partitioned into $n=\lceil\frac{N}{g}\rceil$ distinct generations, where $\lceil x\rceil$ is the smallest integer larger than or equal to $x$. In this work, without loss of generality we assume that $N$ is a multiple of $g$, where $g$ denotes the number of packets in each generation.

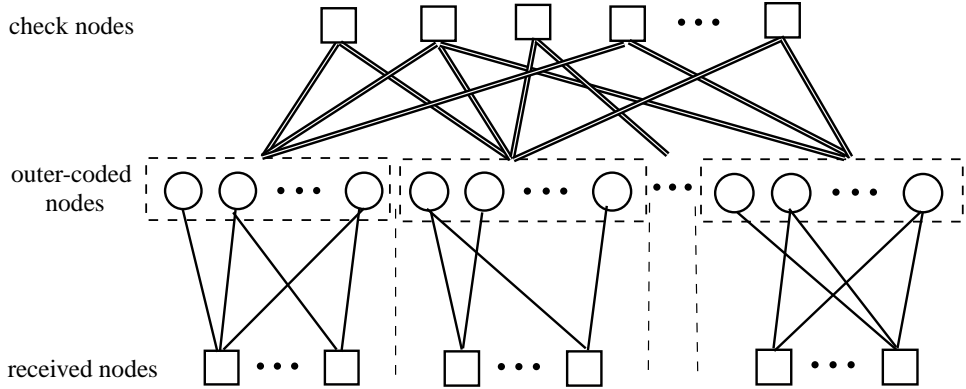
\begin{figure}
\centering
\scalebox{.8} 
{
\begin{pspicture}(0,-3.1998413)(15.697812,3.1998413)
\pscircle[linewidth=0.04,dimen=outer](2.7297537,0.05859398){0.32262176}
\pscircle[linewidth=0.04,dimen=outer](3.6356175,0.05859398){0.32262176}
\pscircle[linewidth=0.04,dimen=outer](5.7345705,0.05859398){0.32262176}
\psdots[dotsize=0.12](4.351425,0.06262501)
\psdots[dotsize=0.12](4.6607447,0.06262501)
\psdots[dotsize=0.12](4.9479694,0.06262501)
\pscircle[linewidth=0.04,dimen=outer](6.816589,0.04676071){0.33524695}
\pscircle[linewidth=0.04,dimen=outer](7.7330594,0.04676071){0.33524695}
\pscircle[linewidth=0.04,dimen=outer](9.856589,0.04676071){0.33524695}
\psdots[dotsize=0.12](8.4484005,0.054780975)
\psdots[dotsize=0.12](8.761342,0.054780975)
\psdots[dotsize=0.12](9.05193,0.054780975)
\pscircle[linewidth=0.04,dimen=outer](11.88691,0.021255901){0.33857176}
\pscircle[linewidth=0.04,dimen=outer](12.832068,0.021255901){0.33857176}
\pscircle[linewidth=0.04,dimen=outer](15.022068,0.021255901){0.33857176}
\psdots[dotsize=0.12](13.57697,0.039386142)
\psdots[dotsize=0.12](13.899707,0.039386142)
\psdots[dotsize=0.12](14.199391,0.039386142)
\psdots[dotsize=0.12](10.597813,0.12015877)
\psdots[dotsize=0.12](10.857813,0.12015877)
\psdots[dotsize=0.12](11.117812,0.12015877)
\psframe[linewidth=0.03,linestyle=dashed,dash=0.16cm 0.16cm,dimen=outer](6.1778126,0.6201588)(2.0978124,-0.44181028)
\psframe[linewidth=0.03,linestyle=dashed,dash=0.16cm 0.16cm,dimen=outer](10.437813,0.60015875)(6.3178124,-0.46332794)
\psframe[linewidth=0.03,linestyle=dashed,dash=0.16cm 0.16cm,dimen=outer](15.697812,0.60015875)(11.317813,-0.46797958)
\psdots[dotsize=0.12](11.009708,2.864938)
\psdots[dotsize=0.12](11.293425,2.864938)
\psdots[dotsize=0.12](11.577141,2.864938)
\psline[linewidth=0.04cm,doubleline=true,doublesep=0.02,doublecolor=white](5.2862716,2.505852)(4.0578127,0.6201588)
\psline[linewidth=0.04cm,doubleline=true,doublesep=0.02,doublecolor=white](5.3378124,2.5201588)(8.157812,0.60015875)
\psline[linewidth=0.04cm,doubleline=true,doublesep=0.02,doublecolor=white](6.9378123,2.5001588)(4.047693,0.63179314)
\psline[linewidth=0.04cm,doubleline=true,doublesep=0.02,doublecolor=white](6.9978123,2.4601588)(8.177813,0.58015877)
\psline[linewidth=0.04cm,doubleline=true,doublesep=0.02,doublecolor=white](6.9978123,2.5001588)(13.717813,0.60015875)
\psline[linewidth=0.04cm,doubleline=true,doublesep=0.02,doublecolor=white](8.577812,2.6001587)(8.197812,0.6201588)
\psline[linewidth=0.04cm,doubleline=true,doublesep=0.02,doublecolor=white](8.597813,2.5601587)(10.777813,0.66015875)
\psline[linewidth=0.04cm,doubleline=true,doublesep=0.02,doublecolor=white](10.057813,2.5001588)(4.0378127,0.6201588)
\psline[linewidth=0.04cm,doubleline=true,doublesep=0.02,doublecolor=white](10.137813,2.5401587)(13.817813,0.6401588)
\psline[linewidth=0.04cm,doubleline=true,doublesep=0.02,doublecolor=white](12.657812,2.6201587)(13.817813,0.6201588)
\psline[linewidth=0.04cm,doubleline=true,doublesep=0.02,doublecolor=white](12.657812,2.5801587)(8.217813,0.58015877)
\psframe[linewidth=0.04,dimen=outer](5.617114,3.1161153)(5.0080285,2.50703)
\psframe[linewidth=0.04,dimen=outer](7.27576,3.1440241)(6.666675,2.5349386)
\psframe[linewidth=0.04,dimen=outer](8.847109,3.1719327)(8.238024,2.5628471)
\psframe[linewidth=0.04,dimen=outer](10.418458,3.1440241)(9.809373,2.5349386)
\psframe[linewidth=0.04,dimen=outer](12.993725,3.1998413)(12.384639,2.590756)
\psframe[linewidth=0.04,dimen=outer](3.6660845,-2.5806284)(3.0668716,-3.1798413)
\psframe[linewidth=0.04,dimen=outer](5.661874,-2.5806284)(5.0626607,-3.1798413)
\psdots[dotsize=0.12](4.009937,-2.8538456)
\psdots[dotsize=0.12](4.2950497,-2.8538456)
\psdots[dotsize=0.12](4.5801625,-2.8538456)
\psframe[linewidth=0.04,dimen=outer](7.663823,-2.5903013)(7.054283,-3.1998413)
\psframe[linewidth=0.04,dimen=outer](9.697941,-2.5903013)(9.088401,-3.1998413)
\psdots[dotsize=0.12](8.01546,-2.8629901)
\psdots[dotsize=0.12](8.306047,-2.8629901)
\psdots[dotsize=0.12](8.596636,-2.8629901)
\psframe[linewidth=0.04,dimen=outer](12.85245,-2.5642562)(12.236865,-3.1798413)
\psframe[linewidth=0.04,dimen=outer](14.950239,-2.5642562)(14.334655,-3.1798413)
\psdots[dotsize=0.12](13.228128,-2.8312907)
\psdots[dotsize=0.12](13.527813,-2.8312907)
\psdots[dotsize=0.12](13.827497,-2.8312907)
\psline[linewidth=0.02cm,linestyle=dashed,dash=0.16cm 0.16cm](6.2578125,-0.23984122)(6.2578125,-2.9998412)
\psline[linewidth=0.02cm,linestyle=dashed,dash=0.16cm 0.16cm](11.237812,-0.19984123)(11.277813,-2.9398413)
\psline[linewidth=0.02cm,linestyle=dashed,dash=0.16cm 0.16cm](10.477813,-0.13984123)(10.457812,-2.9198413)
\psline[linewidth=0.04cm](2.7178125,-0.23984122)(3.2978125,-2.6198413)
\psline[linewidth=0.04cm](3.5978124,-0.25984123)(3.3378124,-2.6198413)
\psline[linewidth=0.04cm](5.6578126,-0.21984123)(5.3578124,-2.6198413)
\psline[linewidth=0.04cm](3.5578125,-0.29984123)(5.2778125,-2.5798411)
\psline[linewidth=0.04cm](5.7328863,-0.23096144)(3.3978126,-2.5798411)
\psline[linewidth=0.04cm](9.767224,-0.19063903)(9.454283,-2.5903013)
\psline[linewidth=0.04cm](6.7778125,-0.23984122)(9.417812,-2.5598412)
\psline[linewidth=0.04cm](7.6978126,-0.29984123)(7.3778124,-2.5598412)
\psline[linewidth=0.04cm](6.8378124,-0.25984123)(7.3578124,-2.5798411)
\psline[linewidth=0.04cm](11.937813,-0.29984123)(14.577812,-2.5798411)
\psline[linewidth=0.04cm](15.017813,-0.29984123)(14.597813,-2.5598412)
\psline[linewidth=0.04cm](14.994023,-0.28105536)(12.550444,-2.6042562)
\psline[linewidth=0.04cm](12.797812,-0.23984122)(14.577812,-2.5398412)
\psline[linewidth=0.04cm](12.817813,-0.25984123)(12.537812,-2.5598412)
\usefont{T1}{ptm}{m}{n}
\rput(0.87546873,2.8101587){check nodes}
\usefont{T1}{ptm}{m}{n}
\rput(0.8721875,0.35015878){outer-coded}
\usefont{T1}{ptm}{m}{n}
\rput(1.0604688,-2.8898413){received nodes}
\usefont{T1}{ptm}{m}{n}
\rput(0.93046874,-0.12984122){nodes}
\end{pspicture}
}
\caption{The graphical representation for a Gamma network code with check nodes, outer coded nodes, and received nodes corresponding to outer code's check equations, outer coded packets, and received packets, respectively. Each group of outer coded nodes constituting a generations is separated by a dashed box. The edges of the outer code's check nodes are in fact hyper edges connecting dense linear combinations of the outer coded packets in the corresponding generation to the check node. The degree of outer code's check nodes is defined as the number of generations connected to it. For example, the degree of the leftmost check node is $2$.}\label{fig:decoding_graph}
\end{figure}


The structure of the linear outer code $\mathcal{C}$ requires some explanation. Fig.~\ref{fig:decoding_graph} shows the graphical representation of a Gamma network code. As the figure shows, in contrast to the check nodes of a conventional linear code which represent parity-check equations imposed on the connected encoded packets, check nodes in $\mathcal{C}$ represent parity-check equations imposed on dense random linear combinations\footnote{A linear combination is called dense when the coefficients are non-zero with high probability. When the coefficients are drawn uniformly at random from $\mathrm{GF}(q)$ the linear combination will be dense.} of the encoded packets of the connected generations. For example, the parity-check equation of the check node $\mathsf{c}$ is given by
\begin{equation}\label{eq:generation_based_checks}
\sum_{i\in\mathcal{N}(\mathrm{c})}\sum_{j=1}^g\alpha_j^{(i)}u_j^{(i)}=0,
\end{equation}
where $\mathcal{N}(\mathrm{c})$ denotes the set of generations connected to $\mathrm{c}$, $\alpha_j^{(i)}$ is the random coefficient of the $j$th outer coded packet from the $i$th generation chosen uniformly at random from $\mathrm{GF}(q)$, and $u_j^{(i)}$ denotes the $j$th outer coded packets from the $i$th generation. For reasons that will be revealed later in Section~\ref{sec:density_evolution}, we characterize the outer code $\mathcal{C}$ by a generating polynomial $P(x)=\sum_{i=2}^{D}p_ix^i$ where $p_{i}$ is the probability that a randomly selected check equation of an instance of the outer codes is connected to $i$ generations. The minimum degree of $P(x)$ is two since any check equation should encounter at least two generations, and $\sum_{i=2}^{D}{p_{i}}=1$. Moreover, generations contributing in each check equation are considered to be distributed uniformly at random among all the generations. We refer to such outer codes as random linear outer codes. More details about selecting $R$ and designing $P(x)$ are left to Section~\ref{sec:optimization}.

In the second step of the encoding, SRLNC is performed on the partitioned outer coded packets in which the source repeatedly forms output packets to be sent to the receiver through the network. In particular, first for each output packet a generation index $j\in\{1,2,\dots,n\}$ is selected uniformly at random with replacement. Then, having selected a vector element $\beta \in (\mathrm{GF}(q))^g$ uniformly at random, an output packet is formed as the linear combination of the $g$ outer coded packets of the $j$th generation using $\beta$ as the coefficient vector. Finally, the output packet is transmitted through the network along with the index of the selected generation $j$, and the coefficient vector $\beta$.

At the intermediate nodes, coding is done by conventional SRLNC as in \cite{Efficient_Methods,RandomAnnex, Silva_Overlapping,tang13}. The complexity of encoding per output packet for Gamma network codes is $O(gs+\bar{d}gs(1-R)/R)$ at the source and $O(gs)$ at intermediate nodes, where $\bar{d}$ is the average degree of the outer code check nodes. This constant complexity per output packet thus gives rise to an overall linear encoding complexity in terms of the block length $K$.

\subsection{Decoding}\label{sec:decoding}

At the receiver, each received packet reveals a linear equation in terms of the outer coded packets of the corresponding generation in $\mathrm{GF}(q)$. The receiver constantly receives packets until it can form a full rank linear equation system for one of the generations. This generation is then decoded by Gaussian elimination. At this time, an iterative decoding process operating on the graph of Fig.~\ref{fig:decoding_graph} initiates.

Each iteration of this iterative decoding process is performed in two steps. In the first step, the \emph{edge-deletion decoding} step \cite{LT}, all the nodes corresponding to the outer coded packets of the recent full rank generations and their connecting edges are removed from the decoding graph. As a result, the degree of the check nodes of the outer code is reduced. Any outer code's check node reduced to degree one represents a dense linear equation in terms of the outer coded packets of the connected generation in $\mathrm{GF}(q)$. Thus, a dense linear equation is added to the linear equation system of the corresponding generation.

The second step follows by updating the linear equation system of the generations and performing Gaussian elimination for the full-rank generations. Any added dense linear equation increases the rank of the linear equation system of that generation by one with high probability if the alphabet size $q$ is large enough. As a result, there is a possibility that the updated generation becomes full rank and its packets could be recovered by Gaussian elimination.

The decoder now iterates between these two steps until either all the packets are recovered or no new packet could be recovered. If no new packet could be recovered, then the receiver receives more packets from the network so that it can resume the decoding. The decoding complexity of Gamma network codes is $O(g^2+gd+g\bar{d}(1-R)/R)$ operations per information packet which translates to a linear overall decoding complexity in terms of $K$.

%
%

\section{Asymptotic Analysis and Design} \label{sec:analysis}
In this section, we will study the average performance of our suggested Gamma network codes. The main goal of this study is to provide an analytical framework to formulate the effects of different code parameters on the average performance. As usual in the literature of modern coding, we will conduct this study under an asymptotic length assumption and derive density evolution equations for the iterative decoding process. Later, the finite-length performance of the example codes will be evaluated through computer simulations in Section~\ref{sec:results} along with the related discussions and remarks on finite-length issues.


As stated in Section~\ref{sec:decoding}, a successful decoding requires all of the generations to become full rank. Any received packet and any outer code's check node reduced to degree one add one dense linear equation to the equation system of the corresponding generation. For large $q$, adding one dense linear equation increases the rank of equation system by one with high probability. Thus, to analyze the decoding process, we are interested in tracking the evolution of the rank of the linear equation systems corresponding to different generations. To this end, in the following, we calculate the average fraction of generations whose equation systems are of rank $i,~i\in\{0,\dots,g\}$ at any instance during the decoding process.

Let the number of received encoded packets at some arbitrary state during the decoding be denoted by $rn$, where $0 \leq r$ is the normalized number of received encoded packets. Having a total of $r$ normalized number of received encoded packets, the decoder can form a system of linear equations in terms of the encoded packets in each generation. By a slight abuse of notations we will refer to the rank of such an equation system as the rank of its corresponding generation.

Let $R_{r,q}$ be the random variable representing the rank of a generation selected uniformly at random, when the normalized number of received encoded packets is equal to $r$ and the code alphabet is of size $q$. The following lemma whose proof is provided in App.~\ref{app:generation_rank_distribution}, gives the statistical structure of the generation rank distribution under very large $q$.

\begin{lem}\label{lemm:generation_rank_distribution}
\begin{align}
q\rightarrow \infty \Rightarrow R_{r,q}\overset{\mathfrak{D}}{\rightarrow}\mathcal{B}_{r,n},
\end{align}
where $\overset{\mathfrak{D}}{\rightarrow}$ denotes the convergence in distribution, and $\mathcal{B}_{r,n}$ is a random variable with the following truncated binomial probability distribution:
\begin{align}
\text{Pr}[\mathcal{B}_{r,n}=i]=\nonumber \begin{cases}
\binom{rn}{i}(\frac{1}{n})^{i}(\frac{n-1}{n})^{rn-i} & i=0,1,\dots ,g-1\\
1-I_{\frac{n-1}{n}}(rn-g+1,g) & i=g
\end{cases}.
\end{align}
Here $I_{\alpha}(m,\ell)$ is the regularized incomplete beta function defined as
\begin{align}
I_{\alpha}(m,\ell)=m\binom{m+\ell-1}{\ell-1}\int_{0}^{\alpha}{t^{m-1}(1-t)^{\ell-1}dt}.
\end{align}
\end{lem}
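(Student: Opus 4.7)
The plan is to condition on $K$, the number of received packets whose generation index matches the randomly chosen generation, argue that in the large-$q$ limit the rank of that generation almost surely equals $\min(K,g)$, and finally rewrite the resulting binomial tail via the standard identity with the regularized incomplete beta function.

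First I would observe that each of the $rn$ received packets carries an i.i.d.\ uniform generation index in $\{1,\dots,n\}$ (since the source picks the index uniformly with replacement and SRLNC at intermediate nodes preserves the generation structure), so by symmetry the marginal count $K$ of received packets lying in the uniformly chosen generation is $\mathrm{Binomial}(rn,1/n)$. Next, I would invoke the paper's standing convention that for sufficiently large $q$ the coding vectors belonging to a common generation are linearly independent with probability tending to $1$ as long as there are at most $g$ of them; consequently the rank of the equation system of the chosen generation, conditional on $K=k$, equals $k$ when $k\le g-1$ (with probability $1-o_q(1)$) and equals $g$ when $k\ge g$ (since the first $g$ such vectors are already independent with probability $1-o_q(1)$). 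Taking $q\to\infty$ collapses the conditional rank distribution onto the deterministic value $\min(K,g)$.

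Combining these two ingredients gives, for each $i\in\{0,\dots,g\}$, the limit $\Pr[R_{r,q}=i]\to\Pr[\min(K,g)=i]$, which for $i<g$ is exactly $\binom{rn}{i}(1/n)^{i}((n-1)/n)^{rn-i}$ and for $i=g$ equals $\Pr[K\ge g]=1-\sum_{k=0}^{g-1}\binom{rn}{k}(1/n)^{k}((n-1)/n)^{rn-k}$. To recast this last expression in the advertised form, I would apply the classical identity $\sum_{k=0}^{g-1}\binom{rn}{k}p^{k}(1-p)^{rn-k}=I_{1-p}(rn-g+1,\,g)$ with $p=1/n$, yielding $1-I_{(n-1)/n}(rn-g+1,g)$. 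Because the support $\{0,1,\dots,g\}$ is finite, pointwise convergence of the probability masses already delivers convergence in distribution. The only mildly delicate step is the conditional-rank identification, but this is a direct invocation of the large-$q$ linear-independence hypothesis the paper has already adopted, so there is no real obstacle; the rest of the argument is essentially a renaming of a binomial tail.
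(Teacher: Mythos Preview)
Your proposal is correct and follows essentially the same route as the paper's proof: condition on the binomial count of packets landing in the chosen generation, argue that the rank collapses to $\min(K,g)$ as $q\to\infty$, and read off the truncated binomial law with the beta-function identity for the tail. The only cosmetic difference is that the paper makes the conditional-rank step explicit by citing a quantitative bound $\Pr[\text{rank}(A)<n]\le 1/((q-1)q^{m-n})$ for random matrices over $\mathrm{GF}(q)$, whereas you lean on the paper's standing large-$q$ independence assumption; both justify the same limit.
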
\hfill$\square$

Since the main goal here is to study the average asymptotic performance, we assume that the value of $q$ is large enough to make the results of the previous lemma valid.


\begin{cor}
When the block length of the SRLNC goes to infinity, we have $n\rightarrow\infty$ and hence $R_{r,q}\overset{\mathfrak{D}}{\rightarrow}\mathcal{R}_r$, where $\mathcal{R}_r$ is a random variable with the following truncated Poisson distribution
\begin{align}\label{eq:GenRankDist}
\text{Pr}[\mathcal{R}_r=i]=\begin{cases}
\frac{e^{-r}r^{i}}{i!} & i = 0,1,\cdots,g-1\\
1-\frac{\Gamma_{g}(r)}{(g-1)!} & i=g
\end{cases},
\end{align}
where $\Gamma_{g}(r)$ is the incomplete Gamma function\footnote{Gamma network codes are named after the incomplete Gamma function since it plays a key role in their design.} given as
\begin{align}
\Gamma_{\alpha}(x)=(\alpha-1)!e^{-x}\sum_{i=0}^{\alpha}{\frac{x^i}{i!}}.
\end{align}
\end{cor}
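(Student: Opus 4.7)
The plan is to deduce the corollary from Lemma~\ref{lemm:generation_rank_distribution} by pointwise passage to the limit $n\to\infty$ on a case-by-case basis, and then invoke the standard fact that pointwise convergence of probability mass functions on a finite set (here $\{0,1,\dots,g\}$) implies convergence in distribution. So the whole argument reduces to two limits, one for each piece of the piecewise definition.

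For $i\in\{0,1,\dots,g-1\}$, the formula $\binom{rn}{i}(1/n)^{i}((n-1)/n)^{rn-i}$ is precisely the probability that a Binomial$(rn,1/n)$ random variable equals $i$, and this is the textbook setting for the Poisson limit theorem with rate $\lambda=r$. Concretely, I would keep $i$ fixed, expand $\binom{rn}{i}/n^{i}=\frac{(rn)(rn-1)\cdots(rn-i+1)}{i!\,n^{i}}\to \frac{r^{i}}{i!}$, note that $(1-1/n)^{rn-i}\to e^{-r}$, and combine to obtain $e^{-r}r^{i}/i!$, matching the claimed $\text{Pr}[\mathcal{R}_r=i]$.

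For $i=g$ I would not try to take the limit of $1-I_{(n-1)/n}(rn-g+1,g)$ head-on; instead I would exploit that the mass function must sum to one. Since probabilities are preserved under the limit (because the support is finite and all other terms converge),
\begin{equation*}
\text{Pr}[\mathcal{R}_r=g]=1-\sum_{i=0}^{g-1}\text{Pr}[\mathcal{R}_r=i]=1-\sum_{i=0}^{g-1}\frac{e^{-r}r^{i}}{i!}.
\end{equation*}
Then I would appeal to the closed-form expression of the (upper) incomplete Gamma function at integer argument, $\Gamma_g(r)=(g-1)!\,e^{-r}\sum_{i=0}^{g-1}r^{i}/i!$, which gives exactly $\text{Pr}[\mathcal{R}_r=g]=1-\Gamma_g(r)/(g-1)!$. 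As a sanity check, one may also verify the limit directly by using the identity $1-I_{(n-1)/n}(rn-g+1,g)=\sum_{j=g}^{rn}\binom{rn}{j}(1/n)^{j}(1-1/n)^{rn-j}$, i.e., the tail of the same binomial, which converges to $\sum_{j=g}^{\infty}e^{-r}r^{j}/j!$, the tail of a Poisson$(r)$; this matches the sum-to-one derivation.

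There is no serious obstacle: the corollary is essentially a one-line invocation of the Poisson limit theorem followed by a bookkeeping step at the truncation point $i=g$. The only mildly delicate item is identifying the residual mass at $i=g$ with $1-\Gamma_g(r)/(g-1)!$, which depends on using the standard integer-argument formula $(g-1)!\,e^{-x}\sum_{i=0}^{g-1}x^{i}/i!$ for the upper incomplete Gamma function (note the upper summation index $g-1$, not $g$, so there appears to be a minor typo in the displayed definition of $\Gamma_\alpha(x)$ in the statement).
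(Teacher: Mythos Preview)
The paper does not actually prove this corollary; it is stated with a closing $\square$ and treated as an immediate consequence of Lemma~\ref{lemm:generation_rank_distribution} (binomial $\to$ Poisson as $n\to\infty$). Your argument is correct and supplies precisely the details the paper omits: the Poisson limit theorem for the cases $i<g$, and the sum-to-one identification of the residual mass at $i=g$ with $1-\Gamma_g(r)/(g-1)!$. Your remark about the upper summation index in the displayed definition of $\Gamma_\alpha(x)$ is also on point: the standard integer-argument identity for the upper incomplete gamma function has the sum running to $\alpha-1$, and indeed only that version makes $\text{Pr}[\mathcal{R}_r=g]=1-\Gamma_g(r)/(g-1)!$ consistent with $\sum_{i=0}^{g}\text{Pr}[\mathcal{R}_r=i]=1$.
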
\hfill$\square$


Now that we have the probability distribution of the rank of a randomly selected generation at hand, we are interested to find the average number of generations of rank $i,~i\in\{0,1,\cdots,g\}$. The following lemma derives this quantity.

\begin{lem}\label{Lem:No_of_fullrank}
Let $E_{r}\{\cdot\}$ denote the expectation operator given that the normalized number of received packets is $r$. The average number of generations of rank $i$ is then given by
\begin{align}
E_{r}\left\{|\{\mathcal{G}|\text{rank}(\mathcal{G})=i\}|\right\}=n\text{Pr}[\mathcal{R}_r=i],
\end{align}
where $|A|$ denotes the cardinality of the set $A$.
\end{lem}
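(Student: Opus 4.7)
The plan is to deduce this result directly from Lemma~\ref{lemm:generation_rank_distribution} (and its corollary) by combining symmetry across generations with linearity of expectation. The quantity $|\{\mathcal{G}\mid\mathrm{rank}(\mathcal{G})=i\}|$ is a sum of indicator random variables, one per generation, so the task is to (i) identify the common distribution of a single indicator and (ii) invoke linearity to collapse the sum into a product with $n$.

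More concretely, I would index the $n$ generations as $\mathcal{G}_1,\ldots,\mathcal{G}_n$ and introduce indicators $X_j := \mathbf{1}\{\mathrm{rank}(\mathcal{G}_j)=i\}$. Then $|\{\mathcal{G}\mid\mathrm{rank}(\mathcal{G})=i\}|=\sum_{j=1}^{n} X_j$, and by linearity of expectation,
\begin{equation*}
E_r\!\left\{\sum_{j=1}^{n} X_j\right\} \;=\; \sum_{j=1}^{n} E_r\{X_j\} \;=\; \sum_{j=1}^{n}\Pr\!\left[\mathrm{rank}(\mathcal{G}_j)=i\,\middle|\,r\right].
\end{equation*}
Next I would argue that the generation label attached to each received packet is drawn independently and uniformly at random from $\{1,\ldots,n\}$, so the joint law of the received-packet pattern is exchangeable in the generation index. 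Consequently, the rank distribution of a \emph{specific} generation $\mathcal{G}_j$ coincides with the rank distribution of a \emph{uniformly chosen} generation, which is exactly the law of $R_{r,q}$ from Lemma~\ref{lemm:generation_rank_distribution}. Under the large-$q$ assumption stated immediately before the lemma, this law is $\mathcal{R}_r$ (the $n\to\infty$ truncated Poisson form from the corollary, or equivalently the truncated binomial form at finite $n$), yielding $\Pr[\mathrm{rank}(\mathcal{G}_j)=i\mid r]=\Pr[\mathcal{R}_r=i]$ for every $j$.

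Summing over the $n$ identically distributed terms then gives $n\,\Pr[\mathcal{R}_r=i]$, which is the claim. There is no real obstacle here; the only thing worth stating cleanly is the exchangeability step, since it is what justifies replacing the ``randomly selected generation'' distribution from the previous lemma with the marginal distribution of a fixed generation. In the write-up I would make this explicit in one sentence (uniform generation selection at the source implies i.i.d.\ generation labels across received packets, hence exchangeability of $(\mathrm{rank}(\mathcal{G}_1),\ldots,\mathrm{rank}(\mathcal{G}_n))$), so that applying Lemma~\ref{lemm:generation_rank_distribution} term-by-term inside the expectation is unambiguous.
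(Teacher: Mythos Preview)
Your proposal is correct and follows essentially the same route as the paper: decompose the count into a sum of indicator variables $X_j=\mathbf{1}\{\mathrm{rank}(\mathcal{G}_j)=i\}$, apply linearity of expectation (the paper explicitly notes that this works despite the correlation among the indicators), and use the fact that each marginal equals $\Pr[\mathcal{R}_r=i]$ by the uniform generation selection. Your added sentence on exchangeability is a clean way to justify the last step, but otherwise the argument is the same.
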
\hfill$\square$

App.~\ref{app:no_of_fullrank} provides the proof of this lemma.

\subsection{Density Evolution Equations}\label{sec:density_evolution}

In the next step of our analysis, we study the growth in the average fraction of full rank generations during the decoding process, assuming that the packet reception has stopped at some arbitrary time. Let $r_{0}$ denote the normalized number of received encoded packets at this time.


The decoder has two sets of equations which could be used for decoding, namely the set of equations corresponding to the received encoded packets and the set of check equations available due to the outer code. Since the main goal in the design of SRLNC is to keep the decoding and encoding efficient, Gaussian Elimination is just performed within each generation, i.e., just performed on the set of equations which are all in terms of packets belonging to a single generation. For the check equations of the outer code, the decoder uses message-passing decoding (i.e., edge-deletion decoding) to reduce them to degree one.

At step zero of the iterative decoding process, where the normalized number of received encoded packets is $r_{0}$, the probability distribution of the rank of any randomly selected generation is given by (\ref{eq:GenRankDist}) as $\text{Pr}[\mathcal{R}_{r_0}=g]=1-\frac{\Gamma_g(r_0)}{(g-1)!}$. Therefore, the initial average fraction of full rank generations (i.e., before using any of the check equations in the decoding), is given by
\begin{align}\label{eq:initial_fullrank}
x_{0}=1-\frac{\Gamma_{g}(r_{0})}{(g-1)!}.
\end{align}

Having the developed mathematical framework at hand, it is now easy to track the average fraction of full rank generations as a function of the normalized number of received packets. In order to keep this simple formulation working for tracking the average fraction of full rank generations when the outer code comes to play in the decoding, we introduce the concept of \emph{effective} number of received packets. The aim of this definition is to translate the effect of check equations which are reduced to degree one into the reception of some imaginary packets from the network. This enables us to use the developed mathematical framework to track the average fraction of full rank generations as the decoding iterates between the edge-deletion decoder working on the outer code and the Gaussian elimination decoder which works inside each generation.

Now assume that after the $i$th iteration of the decoding for some $i\geq 0$, we have a certain fraction $x_i$ of full rank generations. Moreover, let $y_i$ denote the number of check equations of the outer code reduced to degree one at iteration $i$, which have not been reduced to degree one up to the end of iteration $i-1$. Each of these check equations now represents a dense equation in terms of the packets of one of the non-full rank generations. When $q$ is large enough, each of these equations will then increase the rank of its corresponding non-full rank generation by one, with high probability. However, as the selection of generations participating in each parity check equation in the outer code is done uniformly at random in the encoder, the effect of these equations on the decoding is equivalent to receive $y_i$ imaginary packets from the network all belonging to the non-full rank generations. Noticing that receiving more packets in the full rank generations also does not have any effect in the decoding process and does not change the fraction of full rank generations, we can easily model the effect of $y_i$ reduced degree-one parity check equations of the outer code by receiving $y_i/(1-x_i)$ imaginary packets from the network distributed uniformly at random over all the generations. We will refer to these $y_i/(1-x_i)$ imaginary packets as the \emph{effective} number of received packets at the beginning of iteration $i+1$. Moreover, we refer to the quantity
\begin{align}
z_{i+1}=n\Gamma^{-1}_{g}((1-x_i)(g-1)!)+y_i/(1-x_i),\nonumber
\end{align}
as the total effective number of received packets at the beginning of iteration $i+1$. According to Lemma~\ref{Lem:No_of_fullrank}, and the discussion above, the average fraction of full rank generations at iteration $i+1$ is given by
\begin{align}
x_{i+1}=1-\frac{\Gamma_{g}\left(\frac{z_{i+1}}{n}\right)}{(g-1)!}.\nonumber
\end{align}

Now consider the Tanner graph \cite{tanner81} of the outer code. Similar to the idea of density evolution \cite{DensityEvolution} and the intrinsic information transfer (EXIT) charts \cite{tenbrink99}, we track the density of full rank generations through the decoding iterations. In each iteration, in the first step all the edges connecting the full rank generations to the outer code's check nodes are removed. This reduces the degree of the check nodes. In the second step, each check node which is reduced to degree one adds a dense linear equation in terms of the packets of the connected generation to the coefficient matrix of that generation. The following theorem describes the evolution of the average fraction of full rank generations through the iterations of the decoding process.

\begin{thm}\label{thm:Lower_bound}
Let $r_{0}$ denote the normalized number of received packets, and $x_{i}$ for $i\geq 0$ denote the average fraction of full rank generations after iteration $i$ of decoding. Then the average effective number of received packets at iteration $i,~i\geq 1$ is given by
\begin{align}
ng(1-R)P'(x_{i-1})(1-x_{i}),\nonumber
\end{align}
Where $P'(\cdot)$ denotes the first order derivative of $P(x)$ and we have
\begin{align}\label{eq:evolution}
&x_{0} =  \nonumber 1-\frac{\Gamma_{g}(r_{0})}{(g-1)!},\\
&x_{i} = 1-\frac{\Gamma_{g}(r_{0}+g (1-R) P'(x_{i-1}))}{(g-1)!},~~i \geq 1.
\end{align}
\end{thm}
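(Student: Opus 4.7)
The strategy is induction on $i$, combining three ingredients: the rank distribution of a random generation from the Corollary to Lemma~\ref{lemm:generation_rank_distribution}, Lemma~\ref{Lem:No_of_fullrank} for converting per-generation probabilities into expected fractions, and the standard density-evolution tree-like-independence argument that becomes asymptotically exact as $n\to\infty$. For the base case $i=0$, no outer check equation has yet been exploited, so the average fraction of full-rank generations equals $\Pr[\mathcal{R}_{r_0}=g]$, which from~(\ref{eq:GenRankDist}) is $1-\Gamma_g(r_0)/(g-1)!$, matching~(\ref{eq:initial_fullrank}).

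For the inductive step, assume $x_{i-1}$ is the average fraction of full-rank generations entering iteration $i$. The central quantity is the expected number of outer check nodes whose residual degree has been reduced to one after deleting the edges incident to full-rank generations. A check node of degree $j$ has its $j$ generation-neighbors chosen independently and uniformly at random by construction, so in the $n\to\infty$ limit their full-rank statuses can be treated as independent Bernoulli$(x_{i-1})$ random variables; the probability that such a check is at residual degree one is therefore $j x_{i-1}^{j-1}(1-x_{i-1})$. Averaging over the degree distribution $P(x)=\sum_j p_j x^j$ yields $(1-x_{i-1})P'(x_{i-1})$ per check, and multiplying by the total number of check nodes $M=N-K=ng(1-R)$ gives $ng(1-R)(1-x_{i-1})P'(x_{i-1})$ degree-one equations delivered, uniformly at random, to non-full-rank generations. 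For sufficiently large $q$ each such dense equation is, with high probability, linearly independent of the equations already available for its generation and therefore behaves as one additional received packet for that generation.

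The effective-count conversion introduced just before the theorem says that $y$ true imaginary packets spread over the non-full-rank fraction $1-x$ of generations are equivalent, for the rank distribution of a random generation, to $y/(1-x)$ packets spread uniformly over all $n$ generations. Applying this conversion to the count above collapses the factor $(1-x_{i-1})$ and produces a normalized effective imaginary contribution of $g(1-R)P'(x_{i-1})$, so that the total normalized effective reception after iteration $i$ equals $r_0+g(1-R)P'(x_{i-1})$. Re-applying the Corollary at this value of $r$ then gives $x_i=1-\Gamma_g\!\left(r_0+g(1-R)P'(x_{i-1})\right)/(g-1)!$, completing the recursion in~(\ref{eq:evolution}). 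The stated expression $ng(1-R)P'(x_{i-1})(1-x_i)$ for the average effective number of received packets at iteration $i$ is then recovered by multiplying the effective normalized contribution by $n(1-x_i)$, i.e., the non-full-rank population at the end of iteration $i$, which is exactly the number of imaginary packets actually landing on non-full-rank generations.

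The main obstacle is justifying the tree-like independence of the $j$ generation-neighbors of a given check node, which underpins the factorized form $j x_{i-1}^{j-1}(1-x_{i-1})$; this is the classical density-evolution argument~\cite{DensityEvolution} and becomes asymptotically exact because the bipartite graph between generations and outer check nodes is locally tree-like with high probability as $n\to\infty$. A secondary, routine point is the interchange of the two limits $q\to\infty$ (needed to invoke Lemma~\ref{lemm:generation_rank_distribution}) and $n\to\infty$ (needed for density evolution), handled in the standard order: $q\to\infty$ first so that the per-generation rank analysis is exact, and $n\to\infty$ thereafter for the DE asymptotics.
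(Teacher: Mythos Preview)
Your proposal is correct and follows essentially the same approach as the paper: induction on the iteration index, computing the fraction of outer check nodes at residual degree one via $j x_{i-1}^{j-1}(1-x_{i-1})$, converting to an effective uniform packet count, and re-applying the truncated-Poisson rank formula. The one noteworthy difference is that the paper carries out an explicit incremental accounting at the inductive step---it subtracts from the current degree-one count those checks that were already degree-one at the previous iteration and remain so, obtaining a new contribution that telescopes with the inductive hypothesis to $r_0+g(1-R)P'(x_i)$---whereas you compute the cumulative effective contribution directly from the total degree-one count at each iteration. Your shortcut is valid because, for any generation that is still non-full-rank, every check equation it has ever received comes from a check that is \emph{currently} degree one (monotonicity of the full-rank set guarantees such a check cannot have dropped to degree zero); stating this observation explicitly would close the small expository gap between your count of ``degree-one equations delivered'' and the claim that this equals the \emph{cumulative} check contribution added to $r_0$.
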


\begin{proof}
The initial average fraction of full rank generations $x_{0}$, could be calculated using (\ref{eq:initial_fullrank}). In the first iteration of the decoding, decoder removes the edges connecting the full rank generations connected to the outer code's check nodes. Thus, the probability of having a randomly selected check node reduced to degree one at this point is equal to
\begin{align}
\sum_{i=2}^{\infty}{p_{i}\binom{i}{1}(x_{0})^{(i-1)}(1-x_{0})}=P'(x_{0})(1-x_{0}).\nonumber
\end{align}
This is the probability of all except one of the generations participating in that check equation being full rank, and having that last one belong to the set of non-full rank generations. Such a check equation now reveals a dense equation in terms of packets of the only non-full rank generation connected to it and hence increases the rank of that non-full rank generation with high probability. Thus, the probability that a randomly selected check equation increases the rank of a non-full rank generation in iteration $1$ is
\begin{align}
P'(x_{0})(1-x_{0}).\nonumber
\end{align}

Moreover, as the total number of check equations is given by $N-K$, the average number of check equations which are now capable to increase the rank of a non-full rank generation is given as
\begin{align}
N(1-R)P'(x_0)(1-x_0) = ng(1-R)P'(x_0)(1-x_0).\nonumber
\end{align}

As discussed above, the effect of these $ng(1-R)P'(x_{0})(1-x_{0})$ equations on the generation rank growth is equivalent to the effect of $ng(1-R)P'(x_{0})$ dense equations distributed uniformly at random over all of the generations. Thus, we model the impact of iteration one of the edge-deletion by the reception of $ng(1-R)P'(x_{0})$ dense equations distributed uniformly at random over all of the generations. Then the average effective number of equations is $ng(1-R)P'(x_{0})$, or equivalently, the normalized average effective number of equations is
\begin{align}
g(1-R)P'(x_{0}).\nonumber
\end{align}

As all of the equations (i.e. effective check equations reduced to degree one, and equations corresponding to the received packets) which have been used in the coefficient matrices of the generations have a uniform distribution on the set of all generations, then the total average effective number of equations used throughout the decoding up to iteration one is equal to $r_{0}+g(1-R)P'(x_{0})$. Hence, similar to the calculation of $x_{0}$, we can calculate $x_{1}$ as
\begin{align}
x_{1} =  1-\frac{\Gamma_{g}(r_{0}+g(1-R)P'(x_{0}))}{(g-1)!}\nonumber.
\end{align}

Assuming the claim of Theorem~\ref{thm:Lower_bound} holds for all iterations from zero to $i$, we will now prove the claim for iteration $i+1$, and using mathematical induction we then conclude that the theorem holds for all iterations. Recall that we denote the average fraction of full rank generations at the end of iteration $i$ by $x_{i}$, and according to the assumption, the average effective normalized number of the total received packets up to the end of iteration $i$ is $r_{0}+g(1-R)P'(x_{i-1})$. Hence, according to the discussion above, the average fraction of check equations reduced to degree one after the edge deletion phase of iteration $i+1$ is given by $P'(x_{i})(1-x_{i})$. Since we have a total of $N-K=N(1-R)$ check equations, the number of check equations reduced to degree one is
\begin{align}\label{total_oreder_ones}
N(1-R)(1-x_{i})P'(x_{i}).
\end{align}

In order to calculate the average effective number of equations received at iteration $i+1$, we need to find the number of check nodes reduced to degree one at this iteration which have not been reduced to degree one in the previous iterations. Therefore, we need to deduct the average number of check nodes reduced to degree one up to the end of iteration $i$ which are still of degree one from (\ref{total_oreder_ones}). Hence, the total average effective number of received packets at this point is given by
\begin{align}
\nonumber nr_{0}+ng(1-R)P'(x_{i-1})+ &\frac{ng(1-R)}{(1-x_{i})}\left[P'(x_{i})(1-x_{i})-P'(x_{i-1})\frac{(1-x_{i})}{(1-x_{i-1})}\right]=\nonumber \\
& n\left[r_{0}+g(1-R)P'(x_{i})\right].\nonumber
\end{align}

Therefore, the average fraction of full rank generations at the end of iteration $i+1$ is given by
\begin{align}
x_{i+1}=1-\frac{\Gamma_{g}(r_{0}+g(1-R)P'(x_{i}))}{(g-1)!}.\nonumber
\end{align}

The claim of the theorem then holds for all iterations.
\end{proof}

\subsection{Decoding convergence and overhead}\label{sec:convergence}
Using Theorem~\ref{thm:Lower_bound}, a sufficient condition for successful decoding can be derived. Assume that packet reception is stopped after receiving enough packets to form $x_{0}n$ full rank generations, for some $x_{0}$ such that $0<x_{0}<1$. For large enough $q$ and $n$, the random linear outer code $\mathcal{C}$ with check degree distribution $P(x)$ then asymptotically guarantees successful decoding if
\begin{align}\label{eq:sufcon}
x < 1-\frac{\Gamma_g(r_{0}+g(1-R)P'(x))}{(g-1)!},~~x\in (x_{0},1),
\end{align}
where $r_0=\Gamma_g^{-1}\left((g-1)!(1-x_0)\right)$.

Note that to recover all of the encoded packets, $x$ should approach $1$ in (\ref{eq:sufcon}). But $x$ tends to $1$ when the argument of $\Gamma_g(\cdot)$ tends to infinity since $\Gamma_g(\cdot)$ is a strictly decreasing function lower bounded by zero. This means that $P'(x)$ should tend to infinity as $x$ tends to one. Since $x<1$ and $P(x)$ is a polynomial with positive coefficients, this is achieved only when the average degree of the outer code check nodes $\bar{d}$ tends to infinity\footnote{It can be shown that in this case the average degree should scale logarithmically with $n$.} which makes the per packet encoding and decoding complexities unbounded.

Motivated by the construction of Raptor codes \cite{Raptor} and to keep the complexities linear, we concatenate a high-rate linear block code $\mathcal{C}'$, which is called the \emph{pre-code}, with the random linear outer code $\mathcal{C}$. For this purpose, we use a weakened random linear outer code $\mathcal{C}$ of rate $R$ with a small constant $\bar{d}$. A constant $\bar{d}$ means that a fraction of the generations will remain uncovered. The pre-code $\mathcal{C}'$ is then responsible to recover the remaining fraction of the generations. As a result, if we choose $\bar{d}$ and $P(x)$ such that \begin{equation}\label{eq:convergence_condition}
x<1-\frac{\Gamma_g(r_{0}+g(1-R)P'(x))}{(g-1)!},~~x\in (x_{0},1-\delta),
\end{equation}
given a small $\delta>0$, then a capacity-achieving pre-code of rate $R'=1-\delta$ can recover the remaining $\delta$ fraction of generations.

Due to the concatenation of the pre-code, encoding of Gamma network codes should now be done in three steps. In the first step, a file consisting of $K'$ packets is encoded via $\mathcal{C}'$ with rate $R'$ to give a block of $K=K'/R'$ packets. In the next step, encoding this block by the outer code $\mathcal{C}$ of rate $R$ gives a block of $N=K/R=ng$ outer coded packets. The final step consists of the conventional RLNC. The number of information packets is given by $K'=ngR'R=ng(1-\delta)R$. The receiver is able to successfully decode all of the information packets after receiving $r_0n$ encoded packets from the network. As a result, the average reception overhead of this coding scheme is given by
\begin{align}
\nonumber\epsilon = \frac{r_0n-K'}{K'}&=\frac{r_0}{g(1-\delta)R}-1\\
&=\frac{\Gamma_g^{-1}\left((g-1)!(1-x_0)\right)}{g(1-\delta)R}-1\label{eq:overhead}
\end{align}

Considering these, the asymptotic convergence properties of Gamma network codes can be summarized as follows. For a Gamma network code with a linear random outer code of rate $R$ and check degree distribution $P(x)$, if (\ref{eq:convergence_condition}) is satisfied for some $x_0$ and $\delta$, then the Gamma network code can asymptotically recover all of the information packets with an average reception overhead of (\ref{eq:overhead}) using a linear capacity-achieving pre-code\footnote{The pre-code can be a high-rate right-regular low-density parity-check code (LDPC) designed for the binary erasure channel (BEC)\cite{Shokrollahi99}.} of rate $1-\delta$.

Moreover, in the asymptotic regime, the variance of the fraction of recovered generations approaches zero as shown by \cite{Concentration_Kaplan,Concentration_Flatto}. Hence, the average behavior is expected to be observed with high probability.

We conclude this section by an example. For the heuristic outer code design proposed in \cite{Mahdaviani12}\footnote{This heuristic design is based on the assumption that minimizing the overhead can be achieved to a great extent by designing the code such that $x_0 = 1/n$ \cite{Mahdaviani12}.} with $g=25$, we have outer code rate $R=0.6351$, and precode rate $R'=0.9701$ and
\begin{equation}\label{eq:heuristic_px}
P(x) = \sum_{i=2}^{D^*}\frac{1}{i(i-1)}x^i+\frac{1}{D^*}x^{(D^*+1)},
\end{equation}
where $D^*=33$. The evolution of $x_i$ during the decoding process as predicted by (\ref{eq:evolution}) is plotted in Fig.~\ref{fig:EXIT_example} for $x_0=0.10$. Also, the $45$-degree line is plotted. We call this the decoding evolution chart. The point where the evolution chart gets closed, i.e., intersects the $45$-degree line, is equal to $1-\delta$. As depicted in Fig.~\ref{fig:EXIT_example}, $1-\delta$ is very close to one for this example. The predicted average asymptotic reception overhead given by (\ref{eq:overhead}) is then $18.83\%$.

\begin{figure}
\centering
\includegraphics[width=\columnwidth]{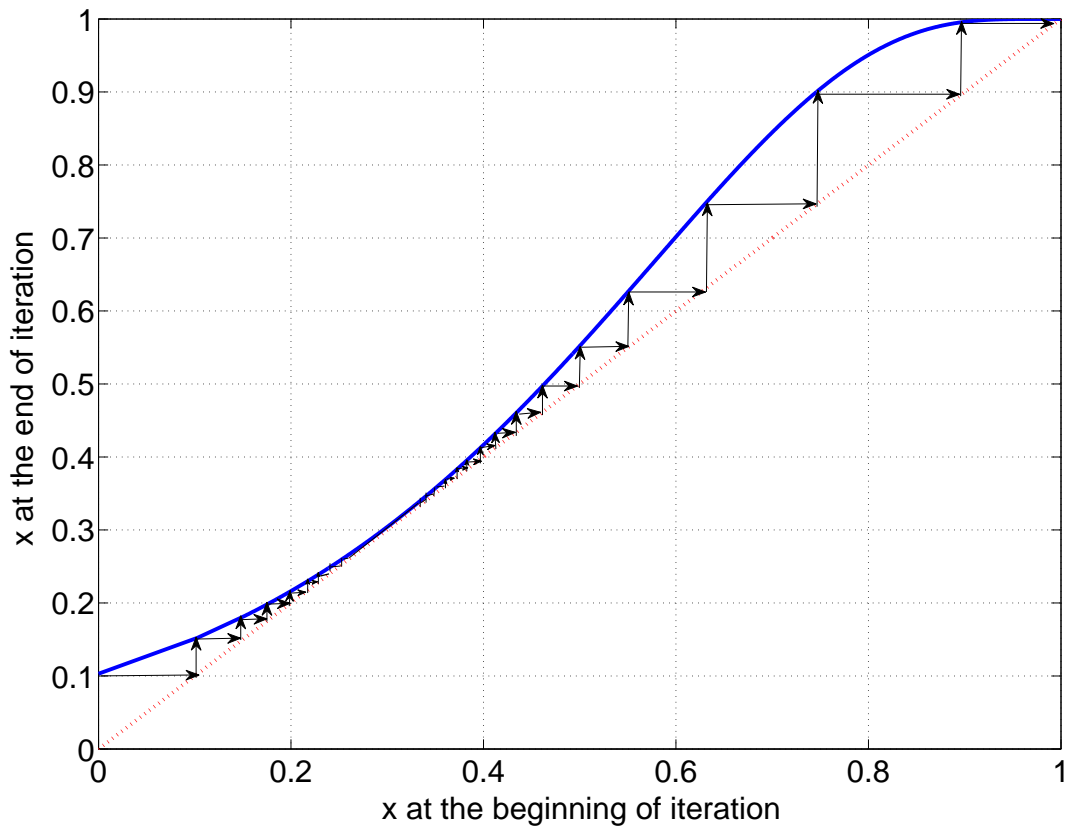}
\caption{The decoding evolution chart for the Gamma network code with the heuristic check degree distribution of \cite{Mahdaviani12}. Using these parameters, the lowest $x_0$ by which the evolution chart is open is $x_0=0.10$.}\label{fig:EXIT_example}
\end{figure}


The concept of decoding evolution chart is very similar to the EXIT chart introduced in \cite{tenbrink99}. This concept has already been used to derive many techniques for the analysis and optimization in many modern coding problems, and is proved to be a very powerful tool with many applications. In network coding, however, no similar concept has been introduced prior to this work. In the following sections we will describe some examples of applying the decoding evolution chart for optimization of the proposed Gamma network codes in asymptotic and finite block lengths as examples of this tool.

\section{Outer Code Optimization} \label{sec:optimization}
The previous section provided us with the tools for the asymptotic analysis of the decoding of Gamma network codes as well as their decoding convergence and reception overhead calculation. Now that this analytical formulation is at hand, we can use it to design good Gamma network codes. The goal of this design process is to find a combination of the parameters of the outer code and the pre-code, namely the rate of the outer code $R$, the check degree distribution $P(x)$, and the rate of the pre-code $R'$, which gives the minimum reception overhead.

For this purpose, we are seeking solution to the following optimization problem:
\begin{align}
\min_{{R, P(x), x_0,\delta}}\epsilon=&\min_{R, P(x), x_0,\delta}\frac{\Gamma_g^{-1}\left((g-1)!(1-x_0)\right)}{g(1-\delta)R}-1\label{eq:optimization}\\
\nonumber&\quad\mathrm{subject~to:}~~(\ref{eq:convergence_condition})~\mathrm{holds}\\
\nonumber&\quad\quad\qquad\qquad\;\;\sum_{i=2}^{D}{p_{i}}=1\\
\nonumber&\qquad\quad\qquad\qquad0\le p_i\le1.
\end{align}
Solving this optimization problem analytically is not easy since some of the parameters inherently depend on each other through the non-linear constraint (\ref{eq:convergence_condition}). Thus, we use numerical methods to find solutions to this optimization problem.

First notice that for a fixed $R$ and $P(x)$, for any given $x_0$ one can find $\delta$ by using the convergence condition (\ref{eq:convergence_condition}). Also, since $0< x_0<1$ and $0<R<1$, for any fixed $P(x)$ one can make a fine grid and do a search over $(x_0,R)$ and minimize $\epsilon$ and find the best combination of $x_0$, $R$, and $\delta$. Finally, solution to (\ref{eq:optimization}) can be found by fixing maximum degree $D$ and searching over $P(x)$ and finding the best set of $x_0$, $R$, and $\delta$ for each $P(x)$.

Searching over the space of $P(x)$ can be done by global numerical optimization techniques. Here, to speed up the process, we use the gradient descent method to find various local minima and then choose the best answer. Our results are not guaranteed to be equal to the global minimum but as we will show in our examples, the decoding evolution chart for the optimized codes gets extremely close to the $45$-degree line which suggests that our results should be very close to the global answer.

Assuming that the generation size is $g=25$, asymptotically optimized Gamma network codes are found for various values of the maximum check degree $D$ by solving (\ref{eq:optimization}). The parameters of these codes are reported in Table~\ref{tb:optimization}. Selecting $D=2$ is equivalent to an all degree-2 check degree distribution. In this case, the check degree distribution is fixed and the rest of the parameters are optimized (code $\mathcal{C}_1$ in the table). The reception overhead under this code is $\epsilon=11.43\%$. The evolution chart of the decoding of this code is plotted in Fig.~\ref{fig:EXIT_optimized}.

As evident from the results of Table~\ref{tb:optimization}, increasing $D$ from $2$ to $30$ decreases the reception overhead from $11.43\%$ to $2.60\%$. This is because increasing $D$ allows larger average degrees for $P(x)$ and hence the closing point of the evolution chart gets closer to $x=1$. Also, note that the reception overhead does not change significantly for $D>15$ since the closing point $1-\delta$ is already very close to $1$ and larger average degrees does not change $1-\delta$ and hence the overhead significantly. The decoding evolution charts for $\mathcal{C}_4$ and $\mathcal{C}_6$ which are optimized under $D=15$ and $D=30$, respectively, are also depicted in Fig.~\ref{fig:EXIT_optimized}.

Note that in the optimized degree distributions of Table~\ref{tb:optimization}, only check nodes of degree $2$ and $D$ have significant weights, with most of the weight on degree $2$. Having a large weight on degree-$2$ check nodes is useful since it maximizes the participation of the outer code's check nodes. Degree-$2$ check nodes start to contribute early at the beginning of the decoding but since they provide low connectivity in the decoding graph, they fail to be useful eventually when the fraction of recovered packets grow. Low connectivity in the graph make some segments of the graph unrecoverable since the decoding process cannot spread to all segments. This leads to a significant increase in the reception overhead. As an example, in the all degree-$2$ code $\mathcal{C}_1$, the outer code participates in the decoding sooner than the other codes with larger $D$ (compare $x_0=0.049$ with the rest) but fails to contribute in the decoding when the fraction of full rank generations gets larger (by having a smaller $1-\delta$) and a lower rate pre-code is needed to finish the decoding. Large-degree check nodes, on the contrary to degree-$2$ check nodes, provide good coverage in the graph but cannot participate early in the decoding since the low fraction of recovered packets is unlikely to reduce them to degree one. Consequently, there should be a balance between degree $2$ and higher degrees. This balance is usually achieved by putting a large weight on degree $2$ and the rest of the weight on the largest allowed degree.


\begin{figure}
\centering
\includegraphics[width=\columnwidth]{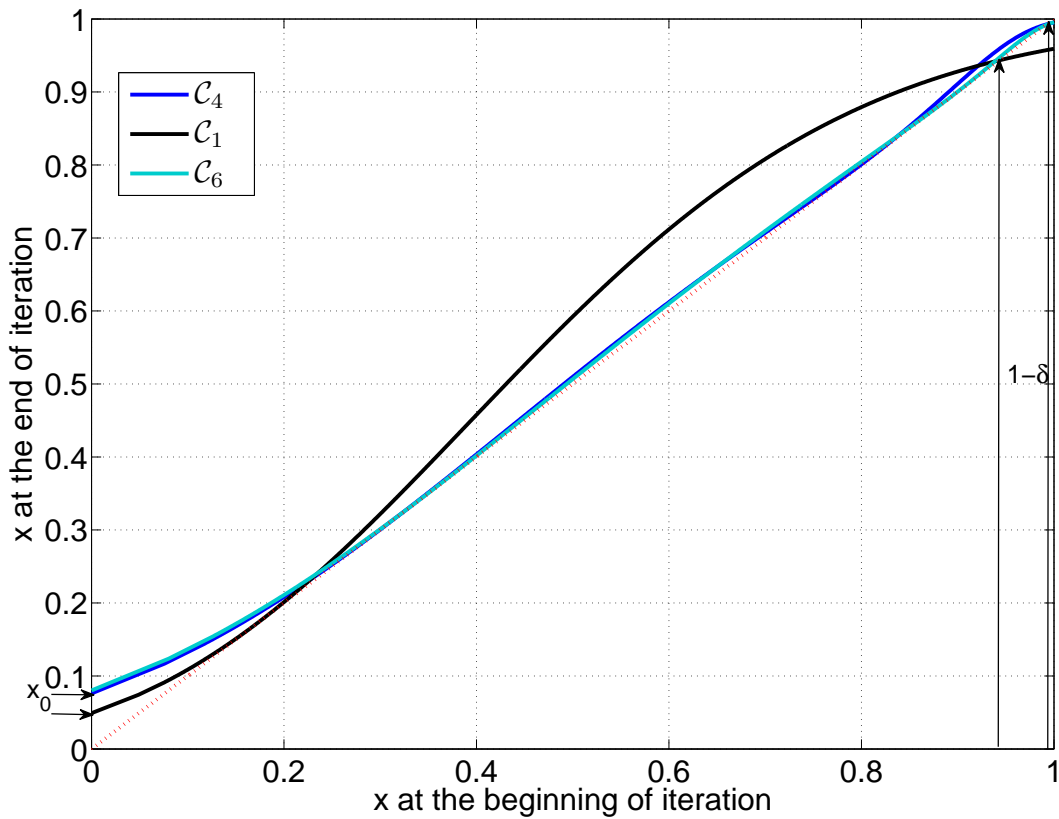}
\caption{The decoding evolution chart for the optimized Gamma network codes $\mathcal{C}_1$, $\mathcal{C}_4$, $\mathcal{C}_6$. The parameters of these codes are reported in Table~\ref{tb:optimization}.}\label{fig:EXIT_optimized}
\end{figure}

\begin{table}[t!]
\caption{Optimized check degree distributions $P(x)=\sum_ip_ix^i$ under $g=25$ for various maximum check degrees $D$} \label{tb:optimization}
\begin{center}
\begin{tabular}{|c|c|c|c|c|c|c|}
\hline
& $\mathcal{C}_1$ & $\mathcal{C}_2$ & $\mathcal{C}_3$ & $\mathcal{C}_4$ & $\mathcal{C}_5$ & $\mathcal{C}_6$\\
\hline
$D$    & $2$      & $5$      & $10$     & $15$     & $20$      & $30$    \\
\hline
$p_2$  & $1.0000$ & $0.7860$ & $0.8788$ & $0.9226$ & $0.9184$  & $0.9162$\\
\hline
$p_3$  &          &          &          &          & $0.0011$  &         \\
\hline
$p_4$  &          &          &          & $0.0004$ &           & $0.0004$\\
\hline
$p_5$  &          & $0.2140$ &          & $0.0004$ &           & $0.0028$\\
\hline
$p_6$  &          &          &          &          & $0.0012$  & $0.0069$\\
\hline
$p_7$  &          &          &          &          & $0.0071$  & $0.0065$\\
\hline
$p_8$  &          &          & $0.0002$ &          & $0.0138$  & $0.0092$\\
\hline
$p_9$  &          &          & $0.0003$ & $0.0005$ & $0.0082$  & $0.0095$\\
\hline
$p_{10}$ &        &          & $0.1207$ & $0.0010$ & $0.0036$  & $0.0075$\\
\hline
$p_{11}$ &        &          &          &          & $0.0005$  & $0.0068$\\
\hline
$p_{12}$ &        &          &          &          & $0.0003$  & $0.0055$\\
\hline
$p_{13}$ &        &          &          &          &           & $0.0032$\\
\hline
$p_{14}$ &        &          &          & $0.0048$ &           &         \\
\hline
$p_{15}$ &        &          &          & $0.0703$ &           &         \\
\hline
$p_{19}$ &        &          &          &          & $0.0004$  &         \\
\hline
$p_{20}$ &        &          &          &          & $0.0455$  &         \\
\hline
$p_{26}$ &        &          &          &          &           & $0.0007$\\
\hline
$p_{27}$ &        &          &          &          &           & $0.0006$\\
\hline
$p_{28}$ &        &          &          &          &           & $0.0002$\\
\hline
$p_{29}$ &        &          &          &          &           & $0.0002$\\
\hline
$p_{30}$ &        &          &          &          &           & $0.0239$\\
\hline
\hline
$x_0$  & $0.0490$ & $0.1100$ & $0.0885$ & $0.0762$ & $0.0782$& $0.0802$\\
\hline
$R$    & $0.6600$ & $0.7342$ & $0.7228$ & $0.7163$ & $0.7192$& $0.7216$\\
\hline
$1-\delta$& $0.9433$ & $0.9746$ & $0.9912$ & $0.9910$ & $0.9910$ & $0.9911$\\
\hline
$\epsilon$ & $11.43\%$ & $6.62\%$ & $3.64\%$ & $2.75\%$ & $2.65\%$ & $2.60\%$\\
\hline
\end{tabular}
\end{center}
\end{table}


The minimum reception overhead can be further decreased by increasing the generation size $g$. For example, reception overheads of $\epsilon=2.17\%$ and $\epsilon=1.92\%$ can be achieved under $D=15$ when $g=50$ and $g=75$, respectively ($\mathcal{C}_7$ and $\mathcal{C}_8$ in Table~\ref{tb:optimization2}). This reduction in the minimum reception overhead is however achieved at the expense of added encoding and decoding complexities\footnote{It is worth mentioning that the complexity still remains linear and only the coefficient increases.}.

\begin{table}[t!]
\caption{Optimized check degree distributions $P(x)=\sum_ip_ix^i$ under $D=15$ for generation sizes $g=50$ and $g=75$.} \label{tb:optimization2}
\begin{center}
\begin{tabular}{|c|c|c|}
\hline
& $\mathcal{C}_7$ & $\mathcal{C}_8$\\
\hline
$D$    & $15$      & $15$    \\
\hline
$p_2$  & $0.9260$ & $0.9303$  \\
\hline
$p_3$  & $0.0007$ &           \\
\hline
$p_5$  & $0.0002$ & $0.0001$  \\
\hline
$p_6$  & $0.0002$ &           \\
\hline
$p_7$  & $0.0006$ & $0.0005$  \\
\hline
$p_8$  & $0.0010$ & $0.0002$  \\
\hline
$p_9$  & $0.0005$ & $0.0003$  \\
\hline
$p_{10}$ &$0.0001$&           \\
\hline
$p_{11}$ &$0.0001$& $0.0002$  \\
\hline
$p_{12}$ &$0.0001$& $0.0002$  \\
\hline
$p_{13}$ &$0.0018$&           \\
\hline
$p_{14}$ &$0.0018$& $0.0025$  \\
\hline
$p_{15}$ &$0.0669$& $0.0658$  \\
\hline
\hline
$g$    & $50$     & $75$     \\
\hline
$x_0$  & $0.0831$ & $0.0853$  \\
\hline
$R$    & $0.8008$ & $0.8374$  \\
\hline
$1-\delta$& $0.9911$ & $0.9911$  \\
\hline
$\epsilon$ & $2.17\%$ & $1.92\%$  \\
\hline
\end{tabular}
\end{center}
\end{table}

\section{Numerical Results and Robust Design} \label{sec:results}
In this section, we investigate the performance of Gamma network codes constructed based on the results of the previous section. In particular, we investigate the reception overhead and decoding failure probability trade-off of Gamma network codes in practical settings and compare them with the other existing SRLNC schemes. We also discuss issues regarding their finite-length performance and provide robust and improved designs.

\subsection{Simulation setup}
The pre-code $\mathcal{C}'$ should be a capacity-achieving code which does not incur extra overhead. To this end, we use the right-regular capacity-achieving binary low-density parity-check (LDPC) codes designed for the binary erasure channel (BEC) \cite{Shokrollahi99}. The check nodes of the pre-code, as opposed to the check nodes of the outer code, impose parity-check equations directly on the encoded packets. Decoding of the pre-code and the outer code is done jointly. As a result, during the decoding any pre-code check node reduced to degree one recovers a new coded packet. This updates the linear equation system for the generation to which the recovered packets belong by removing the new recovered coded packets from them. This reduces the linear equation system of those generations to the non-recovered packets. Since the number of unknowns are reduced, there is a possibility that the non-recovered packets of the updated generations can be recovered by Gaussian elimination. It is also worth mentioning that since the pre-code is a high-rate code, the degrees of its check nodes are usually very large. Thus, they are reduced to degree one and hence help the decoding process only at the final stages of the decoding when a large fraction of the coded packets are recovered.

Using a finite alphabet size $q$ and having designed a pre-code $\mathcal{C}'$ of rate $R'$, a random linear outer code $\mathcal{C}$ of rate $R$, and considering encoded packets block length of $N=ng$, where $g$ is the generation size, we calculate the average reception overhead of the coding scheme by Monte Carlo simulation, i.e., $\bar{\epsilon}=E[(N_r-K')/K']$ where $N_r$ is the number of received packets required for successful decoding. To achieve the trade-off between decoding failure probability and overhead, we simulate the system for a large number of blocks and calculate the empirical complementary cumulative distribution function of the overhead.

\subsection{Numerical results}\label{subsec:results_B}
For a finite-length setup, we set the alphabet size $q=256$, generation size $g=25$, and the number of generations $n=67$, which gives an encoded packet block of $N=1675$ blocks. Using the parameters of the asymptotically optimized code $\mathcal{C}_4$ from Table~\ref{tb:optimization}, we have $R=0.7163$ and $K=RN=1200$. As a result, $N-K=475$ check nodes are produced based on the optimized degree distribution $P(x)$ in Table~\ref{tb:optimization}. For the pre-code, we use a right-regular binary LDPC code of rate $R'=0.97$. This rate is selected slightly lower than the asymptotically optimized rate of $0.991$ due to the fact that there is a gap between the finite-length performance and asymptotic performance of capacity-achieving LDPC codes\footnote{In practice, the best pre-code rate giving the minimum reception overhead can be selected by Monte Carlo simulation.}. The number of information packets will then be $K'=1164$. The average reception overhead achieved by Monte Carlo simulation is $\bar{\epsilon}=10.82\%$. The decoding failure probability versus the reception overhead is plotted in Fig.~\ref{fig:overhead1}.

\begin{figure}
\centering
\includegraphics[width=\columnwidth]{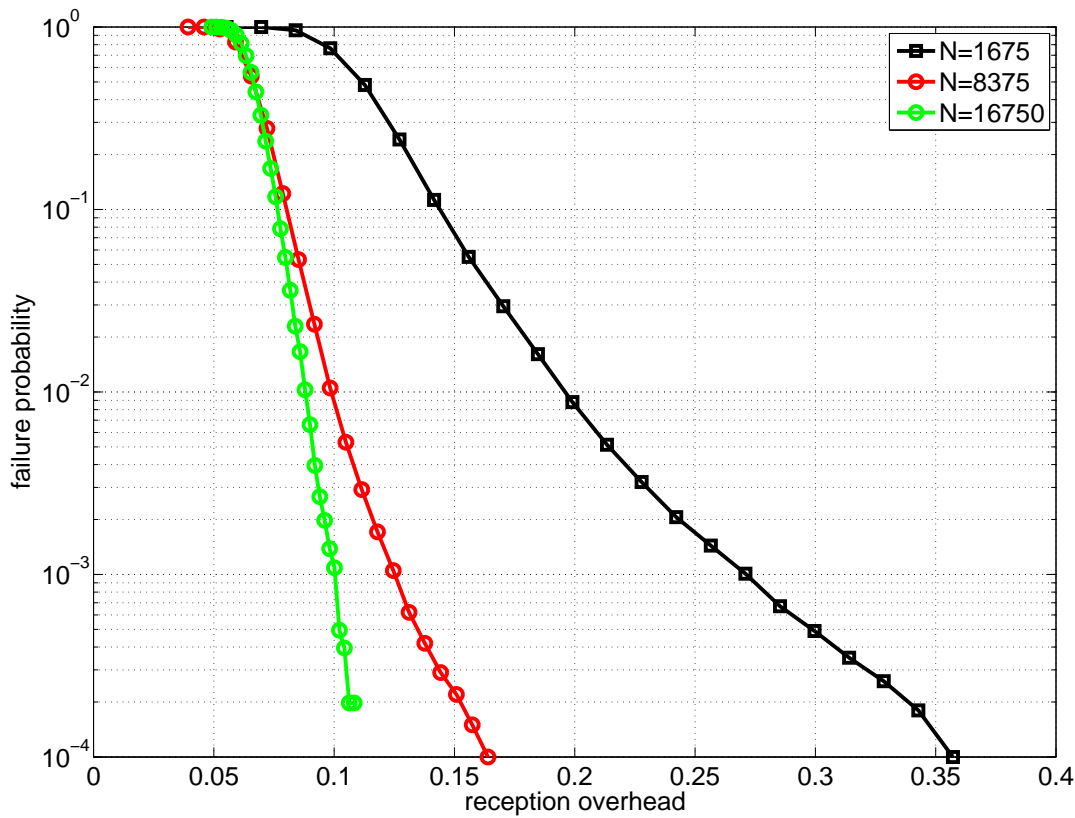}
\caption{Failure probability versus reception overhead for Gamma network codes of different lengths. For all lengths, $g=25$, $q=256$, and $P(x)$ and $R$ are equal to those of $\mathcal{C}_4$ in Table~\ref{tb:optimization}. The encoded packet block lengths are $N=1675$, $N=8375$, and $N=16750$, with pre-codes of rates $R'=0.97$, $R'=0.98$, and $R'=0.98$, respectively. The pre-codes are all binary right-regular LDPC codes. The average reception overheads of these schemes achieved by Monte Carlo simulation are reported in Table~\ref{tb:average_overhead2}.}\label{fig:overhead1}
\end{figure}


We expect improvements in the performance by increasing the block length of the code. To show this, we have also constructed codes with larger block lengths, namely $N=8375$ and $N=16750$. The pre-codes are right-regular binary LDPC codes of rate $R'=0.98$, $g=25$, and the rest of the parameters are the same as those of $\mathcal{C}_4$ in Table~\ref{tb:optimization}. The performances of these codes are depicted in Fig.~\ref{fig:overhead2} which show improvement with regard to the code with length $N=1675$. Table~\ref{tb:average_overhead2} includes the average reception overhead $\bar{\epsilon}$ achieved for these constructions. As $N$ increase, $\bar{\epsilon}$ gets closer to the asymptotic overhead reported in Table~\ref{tb:optimization} for $\mathcal{C}_4$. It is worth mentioning that by using $R'=0.98$ (instead of $R'=0.991$), the asymptotic achievable reception overhead will be $\epsilon=3.90\%$ which is very close to the empirical result obtained at $N=16750$.

We are also interested in investigating how the performance of optimized Gamma network codes varies with alphabet size $q$. Table~\ref{tb:average_overhead2} also includes the average reception overhead achieved by Monte Carlo simulation for the codes constructed with $q=2$ and $q=16$. It is clear that $\bar{\epsilon}$ increases by decreasing $q$ from $256$ to $2$. Also, note that the $\bar{\epsilon}$ achieved by $q=16$ is very close to that by $q=256$. Thus, in practice $q=16$ should normally be enough.



\begin{table}[t!]
\caption{Average overhead for optimized Gamma network codes constructed based on the parameters of $\mathcal{C}_4$ with $g=25$} \label{tb:average_overhead2}
\begin{center}
\begin{tabular}{|c|c|c|c|c|c|c|c|c|c|}
\hline
$N$ & \multicolumn{3}{|c|}{$1675$} & \multicolumn{3}{|c|}{$8375$} & \multicolumn{3}{|c|}{$16750$}\\
\hline
$q$ & $2$ & $16$ & $256$ & $2$ & $16$ & $256$ & $2$ & $16$ & $256$ \\
\hline
$\bar{\epsilon}$ & $21.33\%$ & $11.71\%$ & $10.82\%$ & $18.09\%$ & $7.23\%$ & $6.64\%$ & $16.54\%$ & $5.96\%$ & $5.57\%$ \\
\hline
\end{tabular}
\end{center}
\end{table}

\subsection{Robust Gamma Network Codes} \label{sec:robust_practical}
As evident from the failure-probability versus overhead performance of the asymptotically optimized  Gamma network code with $N=1675$ and $N=8375$ in Fig.~\ref{fig:overhead2}, achieving very low probabilities of failure increases the reception overhead significantly. In other words, the curve is not very steep and there exists an error floor.

The reason for the existence of error floor for highly optimized Gamma network codes can be described as follows. The decoding evolution chart of highly optimized codes is normally very close to the $45$-degree line which makes their opening very narrow, e.g., see Fig.~\ref{fig:EXIT_optimized}. As stated, the evolution chart which is based on (\ref{eq:evolution}) and (\ref{eq:convergence_condition}), predicts the average performance of asymptotic Gamma network codes. When the convergence condition (\ref{eq:convergence_condition}) is satisfied, receiving $r_0n$ packets from the network is enough to trigger a chain reaction in the decoding such that the asymptotic Gamma network code recovers all of the encoded packets without getting stuck and receiving any more packets from the network.

When finite-length codes are used, however, the performance deviates from the average performance expected for the asymptotic regime. As a result, for the finite-length case, the decoder might get stuck several times during the decoding and can only continue after receiving enough packets from the network to form a new full rank generation. Getting stuck in the early stages of decoding when the fraction of recovered packets is small does not increase the reception overhead significantly since the new received packets most likely belong to the non-full rank generations and with high probability they increase the rank of their corresponding generation. However, getting stuck when the fraction of recovered packets is large (equivalent to the upper portion of the decoding evolution chart), normally leads to a significant increase in the reception overhead as most of the new received packets belong to the already full rank generations. The event of getting stuck in the final stages of the decoding happens with low probability but it incurs a large overhead. This is why the error floor exists for these codes in the finite-length cases.

The above discussion suggests that having an asymptotic decoding evolution chart which is widely open at its upper portion leads to codes with smaller error floors since this decreases the probability of getting stuck at points where the fraction of recovered packets is large. Thus for a robust design, asymptotic Gamma network codes can be optimized under an additional constraint to have decoding  evolution charts widely open in the upper portion. This can be done by modifying the convergence constraint to
\begin{equation}\label{eq:robust_constraint}
x<1-\frac{\Gamma_g(r_{0}+g(1-R)P'(x))}{(g-1)!},~~x\in (x_{0},1-\delta_0]
\end{equation}
and
\begin{equation}\label{eq:robust_constraint2}
x<1-\frac{\Gamma_g(r_{0}+g(1-R)P'(x))}{(g-1)!}-\Delta,~~x\in (1-\delta_0,1-\delta'),
\end{equation}
for some $\Delta>0$ and $x_0<1-\delta_0<1-\delta'$, and modify the minimization problem to
\begin{align}
&\min_{R, P(x), x_0,\delta'}\frac{\Gamma_g^{-1}\left((g-1)!(1-x_0)\right)}{g(1-\delta')R}-1.\label{eq:optimization2}\\
\nonumber&\quad\mathrm{subject~to:}~~(\ref{eq:robust_constraint})~\mathrm{and}~ (\ref{eq:robust_constraint2})~\mathrm{hold}\\
\nonumber&\quad\quad\qquad\qquad\;\;\sum_{i=2}^{D}{p_{i}}=1\\
\nonumber&\quad\qquad\qquad\qquad0\le p_i\le1
\end{align}
Notice that the closing point ($1-\delta'$) of the decoding evolution chart given by the modified convergence conditions (\ref{eq:robust_constraint}) and (\ref{eq:robust_constraint2}) is used in the robust optimization problem. This closing point is not the closing point of the true asymptotic evolution chart of the decoding because of the margin $\Delta$. The true asymptotic convergence condition and evolution chart are still given by (\ref{eq:convergence_condition}). After solving (\ref{eq:optimization2}), the pre-code rate is found to be $R'=1-\delta'$ and the overhead of this concatenation will then be
\[\epsilon=\frac{\Gamma_g^{-1}\left((g-1)!(1-x_0)\right)}{g(1-\delta')R}-1.\]

The parameters of two such robust codes designed by setting $\Delta=0.03$, $1-\delta_0=0.8$ for $\mathcal{C}_9$, and $\Delta=0.01$, $1-\delta_0=0.9$ for $\mathcal{C}_{10}$, and solving (\ref{eq:optimization2}) are given in Table~\ref{tb:robust}. Fig.~\ref{fig:EXIT_robust} depicts the decoding evolution chart of $\mathcal{C}_9$. The average reception overhead achieved for a finite-length construction of $\mathcal{C}_9$ with $R'=0.9644$, $N=1675$, and $q=256$ is $\bar{\epsilon}=13.45\%$. The performance is also depicted in Fig.~\ref{fig:robust} where it is shown that the error floor can be decreased using the above robust optimization method. This is achieved at the expense of a slight increase in the average reception overhead. Fig.~\ref{fig:robust} also contains the performance of a robust Gamma network code with $N=8375$ constructed using the parameters of $\mathcal{C}_{10}$ which also shows decrease in error floor. In this case, $\bar{\epsilon}=6.88\%$

\begin{table}[t!]
\caption{Robust optimized codes designed by solving (\ref{eq:optimization2}) and assuming $D=15$} \label{tb:robust}
\begin{center}
\begin{tabular}{|c|c|c|}
\hline
& $\mathcal{C}_9$ & $\mathcal{C}_{10}$\\
\hline
$p_2$  & $0.8443$ & $0.9074$\\
\hline
$p_3$  &  & \\
\hline
$p_4$  & $0.0006$ & \\
\hline
$p_5$  & $0.0006$ & \\
\hline
$p_6$  & $0.0005$ & \\
\hline
$p_7$  & $0.0005$ & \\
\hline
$p_8$  & $0.0024$ & \\
\hline
$p_9$  & $0.0022$ & $0.0006$ \\
\hline
$p_{10}$& $0.0347$ & \\
\hline
$p_{11}$& $0.0265$ & $0.0024$\\
\hline
$p_{12}$& $0.0453$ & \\
\hline
$p_{13}$& $0.0249$ &\\
\hline
$p_{14}$& $0.0081$ & \\
\hline
$p_{15}$& $0.0094$ & $0.0896$\\
\hline
\hline
$g$    & $25$  & $25$  \\
\hline
$x_0$  & $0.0838$ & $0.0777$\\
\hline
$R$    & $0.7046$ & $0.7144$ \\
\hline
$1-\delta'$& $0.9644$ & $0.9820$\\
\hline
$1-\delta_0$& $0.8000$ & $0.9000$\\
\hline
$\Delta$& $0.0300$ & $0.0100$\\
\hline
$\epsilon$ & $8.55\%$ & $4.20\%$\\
\hline
\end{tabular}
\end{center}
\end{table}

\begin{figure}
\centering
\includegraphics[width=\columnwidth]{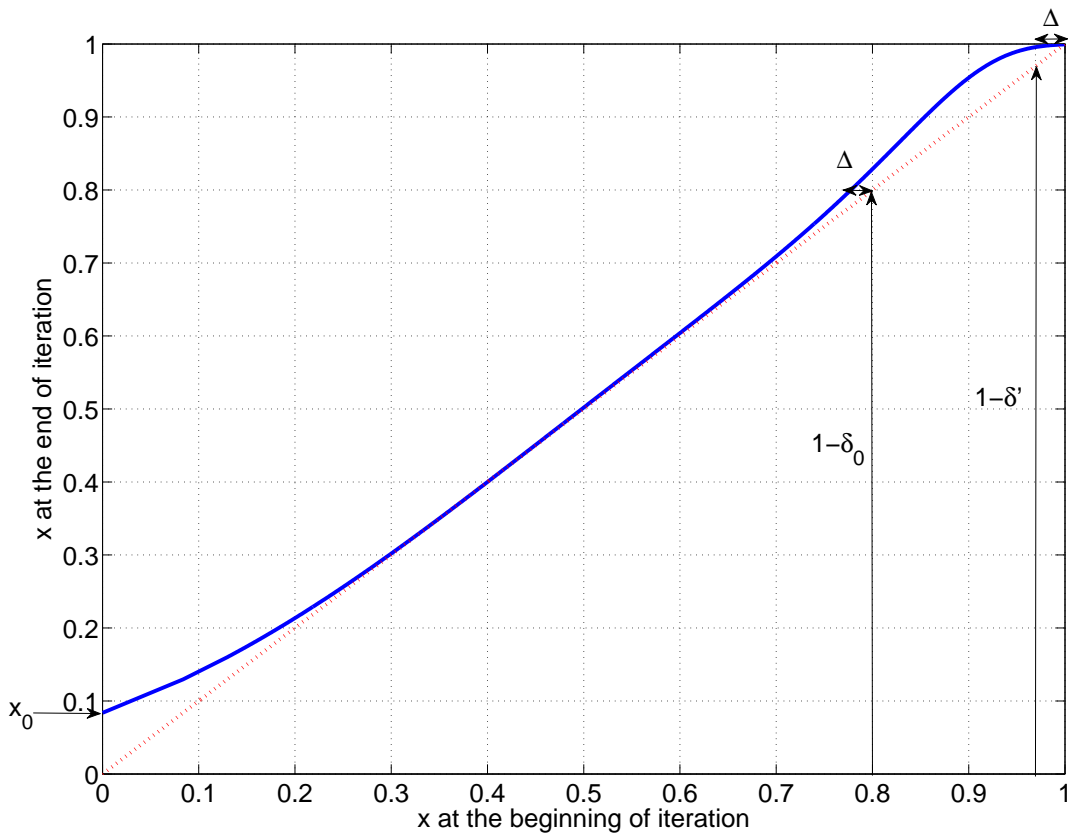}
\caption{The decoding evolution chart for the robust optimized Gamma network code $\mathcal{C}_9$ with parameters $D=15$, $\Delta=0.03$, and $1-\delta_0=0.8$. Notice that this code has a wide opening at the upper portion of its evolution chart compared to that of $\mathcal{C}_4$.}\label{fig:EXIT_robust}
\end{figure}

\begin{figure}
\centering
\includegraphics[width=\columnwidth]{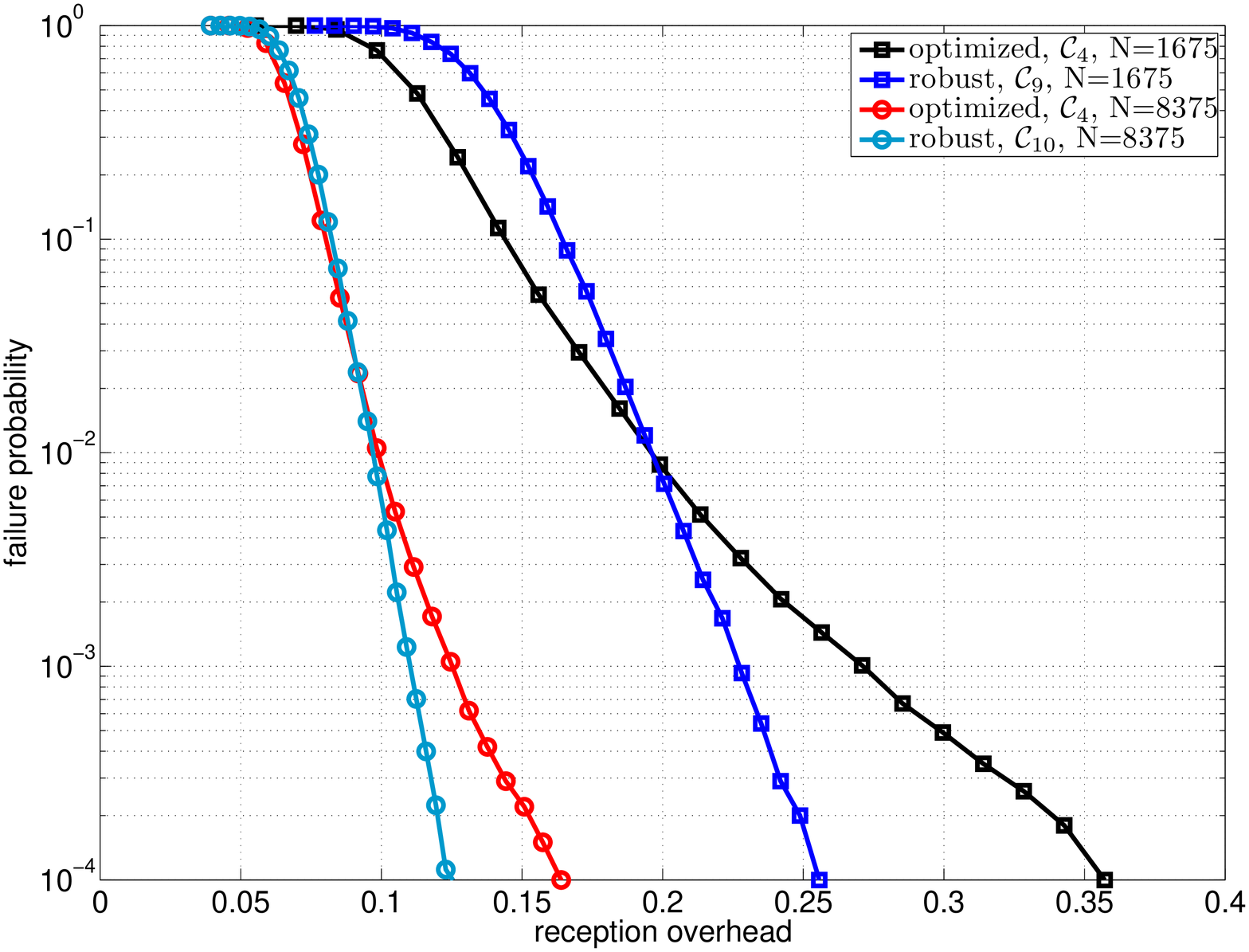}
\caption{Failure probability versus reception overhead for $\mathcal{C}_4$, a network Gamma code optimized for minimum average reception overhead and for $\mathcal{C}_9$ and $\mathcal{C}_{10}$ which are robust Gamma network codes designed based on the method of Section~\ref{sec:robust_practical}. The parameters of the robust codes are reported in Table~\ref{tb:robust}. It is clear that the error floor is improved under the robust design compared to the optimized design of $\mathcal{C}_4$.}\label{fig:robust}
\end{figure}

\subsection{Comparison with other SRLNC schemes}\label{Comparison_results}
In this section, we compare Gamma network codes with the other existing linear-complexity SRLNC schemes  \cite{Efficient_Methods,RandomAnnex,Silva_Overlapping,tang13}. The schemes of \cite{Efficient_Methods,RandomAnnex,Silva_Overlapping} lack exact analysis and design methods and are normally designed heuristically. However, there exists an asymptotic analysis and design method for EC codes based on expander graph arguments\cite{tang13}. As a result, we are also able to compare the performance of optimized Gamma network codes with EC codes in the asymptotic regime as well \cite{tang13}.

Assuming infinite block length and large enough $q$, EC codes can be designed with overheads $\epsilon=6.62\%$ and $\epsilon=5.50\%$ for $g=25$ and $g=50$, respectively. From Tables~\ref{tb:optimization} and \ref{tb:optimization2}, we see that Gamma network codes achieve average overheads of $\epsilon=2.60\%$ and $\epsilon=2.17\%$ for $g=25$ and $g=50$, respectively. This shows that Gamma network codes outperform EC codes.

For a finite-length comparisons, we use the following schemes. In the case of SRLNC with an outer LDPC code as a separate block \cite{Efficient_Methods}, the optimal rate for $N=8375$ and $g=25$ is found by search at $R=0.90$ \cite{Mahdaviani12} which gives rise to $K=7538$. For the Random Annex codes of \cite{RandomAnnex}, the optimal annex size is found to be $11$ for $g=25$ and $N=8375$ which gives rise to $R=0.56$ and hence $K=4690$ packets. For the overlapping SRLNC scheme of \cite{Silva_Overlapping}, the parameters of an optimal diagonal grid code are found to be $(5000,25,335)$ with $\theta=7$ which is equivalent to having a repetition outer code of rate $R=0.5970$. These two schemes do not use any pre-code. In the case of EC codes, the optimal overlap size is $16$ giving rise to $R=0.68$ for the overlapped code and the pre-code is a right-regular LDPC code of rate $R'=0.98$. For our Gamma network codes, we use the parameters of $\mathcal{C}_{10}$ with $N=8375$ and an LDPC pre-code of rate $R'=0.98$.

The average reception overhead achieved under these cases have been reported in Table~\ref{tb:average_overhead}. Fig.~\ref{fig:overhead2} also compares these schemes with our optimized Gamma network codes in terms of failure probability-overhead trade-off. As evident from these results, the optimized Gamma network code outperforms all the other existing outer coded SRLNC schemes. However, we will see later that even further improvement to the performance of Gamma network codes is also possible.

\begin{table}[t!]
\caption{Average overhead for different linear-complexity SRLNC schemes with outer code, $N=8375$, $g=25$, $q=256$} \label{tb:average_overhead}
\begin{center}
\begin{tabular}{|c|c|}
\hline
Code & $\bar{\epsilon}$ \\
\hline
SRLNC with LDPC & $41.07\%$\\
\hline
Random annex code & $31.69\%$\\
\hline
Diagonal grid code & $29.77\%$\\
\hline
EC code & $7.83\%$\\
\hline Gamma network code, robust $\mathcal{C}_{10}$ & $6.88\%$\\
\hline Gamma network code, optimized $\mathcal{C}_4$ & $6.45\%$\\
\hline
\end{tabular}
\end{center}
\end{table}

\begin{figure}
\centering
\includegraphics[width=\columnwidth]{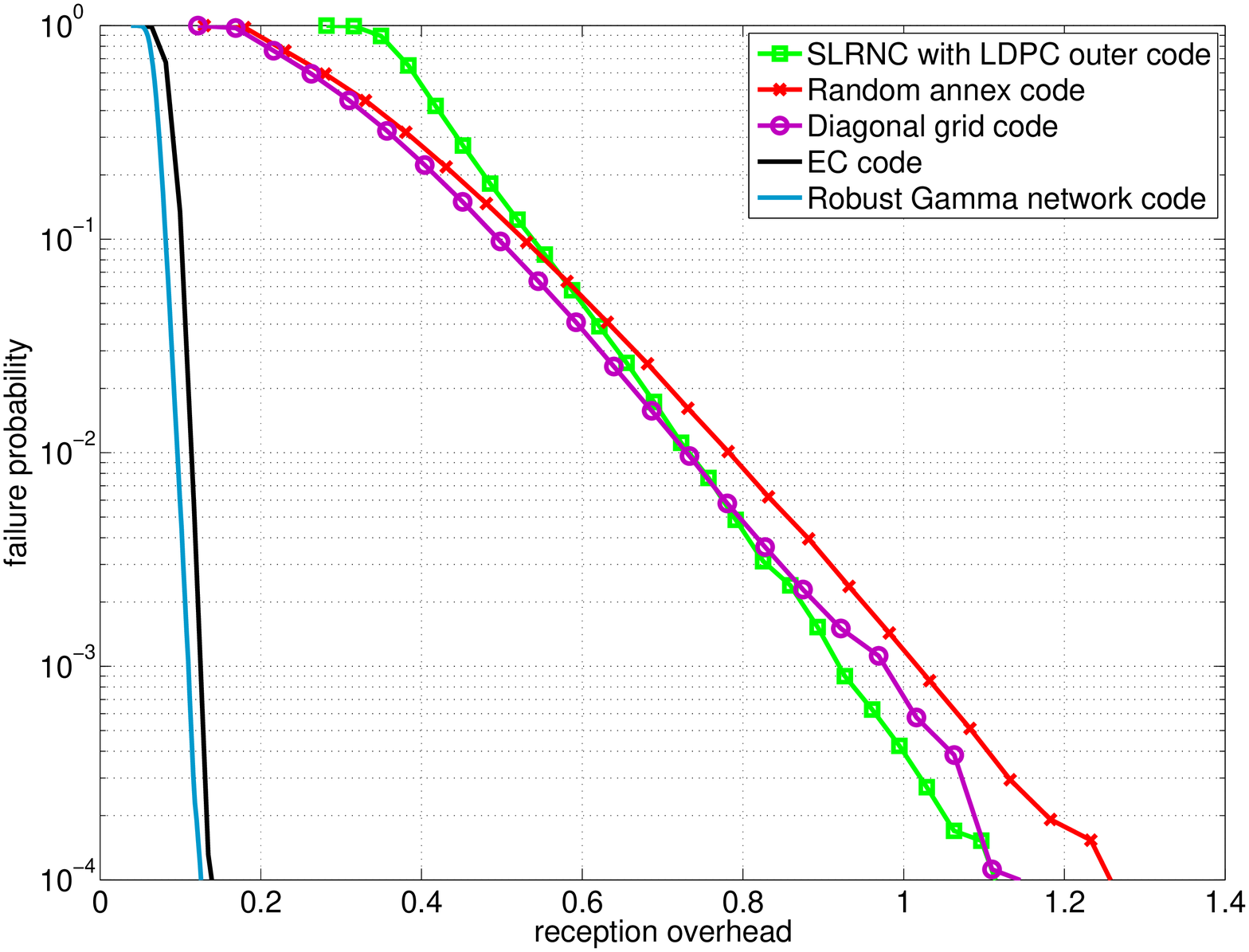}
\caption{Failure probability versus reception overhead for different SRLNC schemes with outer code. The encoded packet block length for all of these constructions is $N=8375$ packets with $g=25$ and $q=256$. The average reception overheads of these schemes achieved by Monte Carlo simulation are reported in Table~\ref{tb:average_overhead}. It is evident that the Gamma network code outperforms all other schemes. }\label{fig:overhead2}
\end{figure}

\section{Encoding and Improved Designs}\label{sec:improved_design}
\subsection{Encoding}\label{sec:improved_design_encoding}
We stated in Section~\ref{sec:encoding} that the check nodes of the outer code impose parity-check constraints on the dense linear combinations of all of the packets of their adjacent generations. This is different from how a check node of a conventional linear code imposes constraints directly on the connected packets. Thus, the encoding process of the outer code in Gamma network codes is different from conventional linear codes.

To achieve linear encoding complexity per block length, encoding the outer coded packets can be done as follows. Assume that we have an outer code of rate $R=K/N$ with check degree distribution $P(x)$ and generation size $g$.
\begin{enumerate}
        \item We construct an instance of the ensemble of Tanner graphs specified by $P(x)$ which connects $n=N/g$ generations to $N-K$ check nodes. We then call the number of check nodes connected to each generation $G_i$ the degree of that generation $d_{G_i}$.
        \item The $n$ generations are sorted based on their degrees in a descending order such that $d_{G_i}\ge d_{G_{i+1}}$, $1\le i\le n-1$.
        \item The $K$ pre-coded packets constitute $K$ outer coded packets as in a systematic code. These packets are distributed into the $n$ generations based on the following rules:
            \begin{enumerate}
                \item Generation $i$ receives $m_i=g-[{d_{G_i}}/{\bar{d}}]$ where $\bar{d}$ is the empirical average degree of the check nodes and $[\cdot]$ denotes rounding to the nearest integer.
                \item It is ensured that $\sum_{i=1}^nm_i=K$.
            \end{enumerate}
       \item Generation $G_i$ now contains $m_i$ packets $\{u^i_1,\dots,u^i_{m_i}\}$. Set $i=1$.
       \item For $G_i$, we select $g-m_i$ number of check nodes among the $d_i$ check nodes connected to $i$ with the highest check degrees. The set of these check nodes is denoted by $\mathcal{M}(G_i)$.
       \item We generate $g-m_i$ parity packets $\{u_{m_{i+1}}^i,\dots,u_{g}^i\}$ as
        \begin{equation}
            u_j^i=\sum_{k\in\mathcal{N}(\mathrm{c})}\sum_{l=1}^{n_k}\alpha_l^ku_l^k,~j\in\{m_{i+1},\dots,g\}
        \end{equation}
       where $\mathrm{c}\in\mathcal{M}(G_i)$, $\mathcal{N}(\mathrm{c})$ denotes the set of generations connected to $\mathrm{c}$, $n_{G_k}$ denotes the number of packets currently available in $G_k$, and $\alpha$ are random coefficients from $\mathrm{GF}(q)$.
       \item If $i=n$ stop. Else set $i:=i+1$ and go to step 5.
\end{enumerate}
This algorithm ensures that the number of packets which participate in the random linear combinations are maximized.


\subsection{Improved designs}\label{sec:improved_design_improved}
It is also possible to impose parity-check constraints directly on the outer coded packets as in a conventional linear code instead of their dense linear combinations. In this case, instead of (\ref{eq:generation_based_checks}), the parity-check equation represented by check node $\mathrm{c}$ is given by
\begin{equation}\label{eq:packet_based_checks}
\sum_{i\in\mathcal{N}(\mathrm{c})}u^{(i)}=0.
\end{equation}
Then, $p_i$ in $P(x)=\sum p_ix^i$ will represent the probability that any given check node be connected to $i$ outer coded packets. The decoding process for such a code should be modified since any outer code's check node which is reduced to degree one, similar to the pre-code's check nodes, now recovers an outer coded packet instead of adding a dense linear equation to its corresponding generation. If the new to-be-recovered packet has not already been decoded, it can now be removed from the linear equation system of its belonging generation. Since the linear equation system has dense coefficient vectors, rank will be preserved with high probability and with less number of unknowns now there exists a possibility that the equation system can be solved.

In the analysis of the proposed outer code of Section~\ref{sec:encoding}, we assumed that every check node reduced to degree one increases the rank of its corresponding generation by one. This assumption is not valid in general when check nodes are imposed on outer coded packets since a reduced degree-one check node may be connected to an already recovered packet. Nevertheless, if the outer code is constructed in such a way that each outer coded packet is connected to at most one check node, then a reduced degree-one check node always recovers a new outer coded packet. This requires the average degree of check nodes to be upper bounded as $\bar{d}<1/(1-R)$. Under this assumption, the results of Theorem~\ref{thm:Lower_bound} and the convergence condition (\ref{eq:convergence_condition}) are still valid.

For a given $P(x)$, $R$, and $r_0$, as the convergence condition (\ref{eq:convergence_condition}) predicts, the outer code can recover $1-\delta$ fraction of the generations. At this point, as given by (\ref{total_oreder_ones}), there will be $\delta N(1-R)P'(1-\delta)$ check nodes of degree one which belong to the remaining $\delta$ fraction of generations. This means that $\delta N(1-R)P'(1-\delta)$ are already recovered among the $\delta N$ packets belonging to the remaining non-full-rank generations. Thus, the rate of the pre-code required to recover the remaining outer coded packets is
\begin{equation}\label{eq:precode_packet_level}
R'= 1-\frac{N\delta-\delta N(1-R) P'(1-\delta)}{N} = 1-\delta+\delta(1-R)P'(1-\delta).
\end{equation}
This then gives an average reception overhead of
\begin{equation}\label{eq:overhead_packet_level}
\epsilon= \frac{\Gamma_g^{-1}\left((g-1)!(1-x_0)\right)}{gR(1-\delta+\delta(1-R)P'(1-\delta))}-1.
\end{equation}
Since $\delta (1-R)P'(1-\delta)>0$, then the average reception overhead predicted by (\ref{eq:overhead_packet_level}) will be smaller than (\ref{eq:overhead}). Thus, the average reception overhead of the case where check nodes impose constraints directly on packets is smaller than the construction of Section~\ref{sec:encoding}.

The average reception overhead of (\ref{eq:overhead_packet_level}) defines a new objective for the optimization problem given in (\ref{eq:optimization}). Notice that the new objective can change the results of the optimization significantly when $1-\delta$ is not very close to one. For example, when $D=2$, the new optimization problem gives $x_0=0.0540$, $R=0.6800$, and $1-\delta=0.9172$ leading to $R'=0.9658$ and average reception overhead of $\epsilon=6.77\%$. This shows a significant reduction in overhead compared to the overhead of $11.43\%$ reported in Table~\ref{tb:optimization} for $\mathcal{C}_1$.

\emph{Remark:} Note that in the case of improved Gamma network codes, setting the maximum degree of the outer code to be $D=2$ as in the above mentioned example reduces the Gamma network codes to the SRLNC with a repetition outer code, or in other words, SRLNC with overlapping generations. Moreover, it is interesting to note that in this case, the optimal outer code rate for the Gamma network code is calculated to be $R=0.6800$ which is in a very close agreement with the EC codes design \cite{tang12,tang13} as mentioned in Section~\ref{Comparison_results}. However $D =2$ is not the optimal choice for Gamma network codes and its performance can be further improved by increasing the maximum degree and hence outperforms all the previously existing SRLNC schemes including the EC codes as depicted in Fig.~\ref{fig:overhead3}.

Given the parameters of the optimized Gamma network code $\mathcal{C}_4$, we have constructed finite-length Gamma network codes whose outer code have packet-level check nodes. Notice that in this case, since $1-\delta$ is very close to one, the results of the new optimization will not be significantly different from those of Table~\ref{tb:optimization}, except for $R'$ and the average reception overhead $\epsilon$. We have chosen the new pre-codes used in the simulations to be LDPC codes of rate $R'=0.99$. Table~\ref{tb:average_overhead3} shows the average reception overhead achieved under these codes. Fig.~\ref{fig:overhead3} also compares the performance of the code of length $N=8375$ with the robust Gamma network code and EC codes of Fig.~\ref{fig:overhead2}. As the figure shows, code designed with packet-level check nodes and the LDPC pre-code of rate $R'=0.99$ outperforms the other designs.

\begin{table}[t!]
\caption{Average overhead for optimized Gamma network codes constructed based on the parameters of $\mathcal{C}_4$ with $g=25$ and $q=256$. The check nodes of the outer code impose parity-check equations directly on the packets. The pre-codes are right-regular LDPC codes of rate $R'=0.99$.} \label{tb:average_overhead3}
\begin{center}
\begin{tabular}{|c|c|c|c|}
\hline
$N$ & $1675$ & $8375$ & $16750$ \\
\hline
$\bar{\epsilon}$ & $10.30\%$ & $5.75\%$ & $5.18\%$\\
\hline
\end{tabular}
\end{center}
\end{table}

\begin{figure}
\centering
\includegraphics[width=\columnwidth]{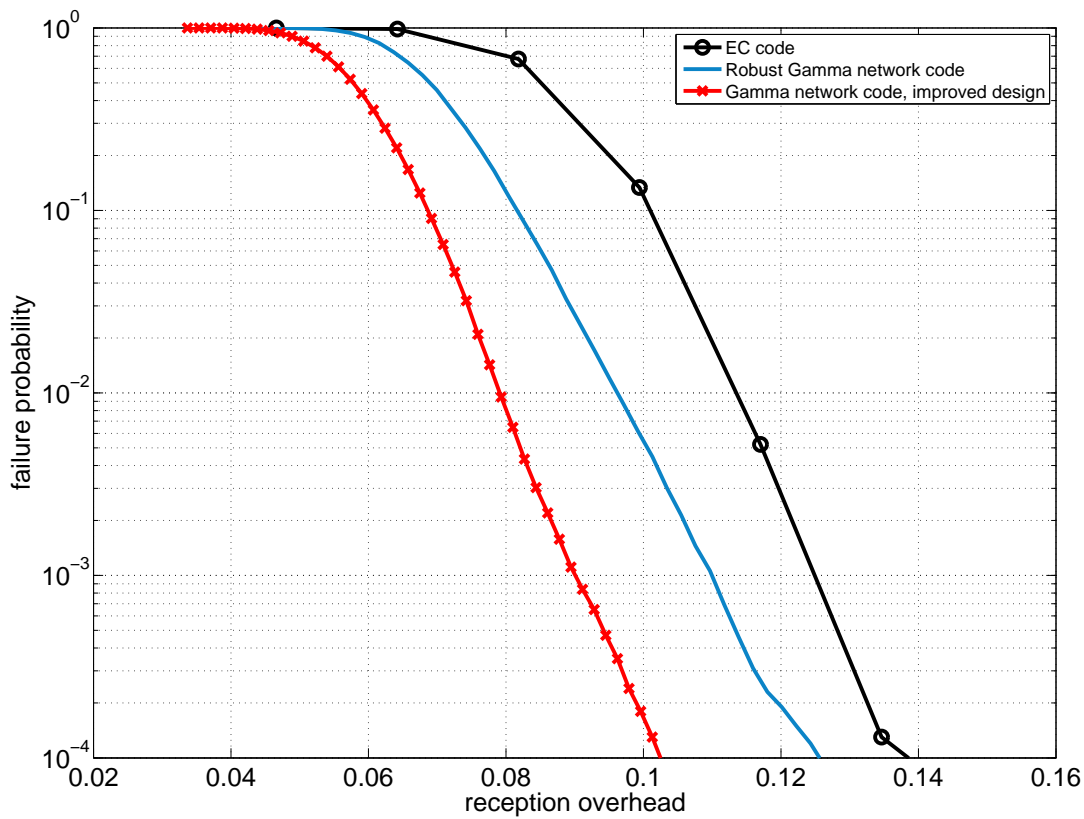}
\caption{Failure probability versus reception overhead comparison between the optimized Gamma network code with packet-level outer code check nodes and the robust Gamma network code and EC codes of Fig.~\ref{fig:overhead2}.}\label{fig:overhead3}
\end{figure}

\section{Conclusion}\label{sec:conclusion}
In this paper, we introduced and studied a new family of overhead-efficient SRLNC schemes called Gamma network codes. The introduced scheme was based on incorporating a linear outer code in the SRLNC construction. We then presented an analytical framework by formulating density evolution equations for the analysis and design of asymptotic Gamma network codes. Using the proposed analysis method, we presented an optimization technique to design minimum overhead Gamma network codes and obtain their fundamental limits. We followed our studies with numerical results and showed that Gamma network codes outperform all the other existing SRLNC schemes. Realizing that highly optimized Gamma network codes exhibit error floors in finite-lengths, we proposed a robust design method to lower the error floors. We finally discussed their encoding method and also introduced improved designs.

\section*{Acknowledgement}
The authors wish to thank Hossein Bagheri for many valuable discussions which has influenced parts of this work.

\appendices
\section{Proof of Lemma \ref{lemm:generation_rank_distribution}} \label{app:generation_rank_distribution}
Let $B_{r,n}$ be a random variable and its outcome be the number of encoded packets received for a randomly selected generation, when the normalized number of received encoded packets is $r$. Since it is assumed that the probability of a randomly selected received encoded packet belonging to a certain generation has a uniform distribution on the set of all the generations, $B_{r,n}$ has a binomial probability distribution as
\begin{align}
\text{Pr}[B_{r,n}=i]=\binom{rn}{i}\left(\frac{1}{n}\right)^{i}\left(\frac{n-1}{n}\right)^{rn-i},~i=0,1,\dots,rn.
\end{align}

To complete the proof, we use the result of theorem 3.1 in \cite{RaptorQ}. This theorem states that if $A_{m\times n}$ is a matrix in which each element is chosen independently and uniformly at random from $\mathrm{GF}(q)$, then for $n\leq m$
\begin{align}\label{rankdef}
\text{Pr}[\text{rank}(A)<n]\leq \frac{1}{(q-1)q^{m-n}}
\end{align}
Now, for $i\in\{0,1,\cdots,g-1\}$ we have,
\begin{align}
\text{Pr}[R_{r,q}=i] = \sum_{j=i}^{rn}{\text{Pr}[R_{r,q}=i|B_{r,n}=j]\text{Pr}[B_{r,n}=j]}\nonumber
\end{align}
But according to (\ref{rankdef}) it is easy to see that as $q\rightarrow\infty \Rightarrow \text{Pr}[R_{r,q}=i|B_{r,n}=j]\rightarrow 0$ for all $i<j$. Furthermore, as $\sum_{i=0}^{j}{\text{Pr}[R_{r,q}=i|B_{r,n}=j]}=1$, we have $q\rightarrow \infty \Rightarrow {\text{Pr}[R_{r,q}=j|B_{r,n}=j]}\rightarrow 1$. Hence, $q\rightarrow \infty \Rightarrow {\text{Pr}[R_{r,q}=i]}\rightarrow {\text{Pr}[B_{r,n}=i]},~i=0,1,\cdots,g-1$.

In addition,
\begin{align}
\text{Pr}[R_{r,q}=g] = \sum_{j=g}^{rn}{\text{Pr}[R_{r,q}=i|B_{r,n}=j]\text{Pr}[B_{r,n}=j]}.\nonumber
\end{align}
Again as $q\rightarrow\infty \Rightarrow \text{Pr}[R_{r,q}=i|B_{r,n}=j]\rightarrow 0$ for all $i<g\leq j$ as a direct corollary of (\ref{rankdef}), we have $q\rightarrow \infty \Rightarrow {\text{Pr}[R_{r,q}=g|B_{r,n}=j]}\rightarrow 1$, for all $g\leq j$. Finally, we have
\begin{align}
q\rightarrow \infty \Rightarrow {\text{Pr}[R_{r,q}=g]}&\rightarrow \sum_{j=g}^{rn}{\text{Pr}[B_{r,n}=j]}\nonumber \\&=1-\text{Pr}[B_{r,n}\leq g-1]\nonumber \\&=1-I_{\frac{n-1}{n}}(rn-g+1,g) \nonumber
\end{align}

\section{Proof of Lemma \ref{Lem:No_of_fullrank}} \label{app:no_of_fullrank}
Take $A_{r,i,j},~0\leq r,~i \in \{1,\cdots,g\}, j \in \{1,\cdots,n\}$, as the event that the $j$th generation is of rank $i$ based on the received encoded packets' corresponding equations, when the normalized number of received encoded packets is $r$. Then we have
\begin{align}
&\mathbb{E}_{r}\{|\{G|G\in \mathcal{G},~\text{rank}(G)=i\}|\}\nonumber \\&=\mathbb{E}\lbrace\sum_{j=1}^{n}{I_{A_{r,i,j}}(\omega)}\rbrace,~\omega \in \Omega,
\end{align}
where $\mathcal{G}$ is the set of all generations, $\Omega$ is the set of all possible outcomes of the packet reception, and $I_{A}$ is the indicator function of the event $A$, i.e.,
\begin{align}
I_{A}(\omega)=\begin{cases} \nonumber
 1, & \text{if }\omega\in A \\
 0, & \text{if }\omega\notin A
\end{cases}.
\end{align}

Now, although the random variables $I_{A_{r,i,j}}$ are correlated, using the linearity of the expectation we have
\begin{align}
\mathbb{E}_{r}\{|\{G|G\in \mathcal{G},~\text{rank}(G)=i\}|\}&=\sum_{j=1}^{n}{\mathbb{E}\{I_{A_{r,i,j}}(\omega)\}}\nonumber \\ &=n\text{Pr}[\mathcal{R}_r=i]. \nonumber
\end{align}

\bibliographystyle{IEEEtran}
\bibliography{IEEEabrv,NetworkCoding_bib}

%

\end{document}